\documentclass[11pt,a4paper]{article}
\makeatletter
\RequirePackage[utf8]{inputenc}
\RequirePackage[T1]{fontenc}
\RequirePackage[english]{babel}
\RequirePackage{lmodern,microtype}
\RequirePackage[a4paper,top=2.5cm,bottom=2.5cm,left=2.7cm,right=2.7cm,footskip=1.2cm]{geometry}
\RequirePackage{amsmath,amssymb,amsthm,mathtools,bm}
\RequirePackage{graphicx,booktabs,array,tabularx,multirow,makecell,longtable}
\RequirePackage{caption,subcaption,xcolor,enumitem,csquotes,tikz,siunitx}
\RequirePackage{needspace}
\usetikzlibrary{arrows.meta,positioning,shapes.geometric}
\RequirePackage[backend=biber,style=authoryear-comp,natbib=true,url=true,doi=true,isbn=false,maxbibnames=99,maxcitenames=2,uniquelist=false,sorting=nyt]{biblatex}
\RequirePackage[colorlinks=true,linkcolor=black,citecolor=black,urlcolor=black]{hyperref}
\RequirePackage[noabbrev,capitalize]{cleveref}
\newcolumntype{L}[1]{>{\raggedright\arraybackslash}p{#1}}
\newcolumntype{Y}{>{\raggedright\arraybackslash}X}
\setlist{nosep,leftmargin=*}
\let\ffraStandardSection\section
\renewcommand{\section}{\Needspace{7\baselineskip}\ffraStandardSection}
\theoremstyle{plain}
\newtheorem{proposition}{Proposition}

\newtheorem{lemma}[proposition]{Lemma}

\theoremstyle{definition}
\newtheorem{definition}{Definition}

\theoremstyle{remark}
\newcommand{\ffraseries}{ForesightFlow Research}
\newcommand{\ffraauthor}{Maksym Nechepurenko}
\newcommand{\ffraaffiliation}{Research Department of Devnull FZCO, Dubai, UAE}
\newcommand{\ffraemail}{maksym@devnull.ae}
\newcommand{\ffraorcid}{0000-0002-9515-8841}
\newcommand{\ffraversion}{r0.0.0}
\newcommand{\ffradate}{\today}
\newcommand{\ffrastatus}{}
\newcommand{\ffraSetup}[1]{\setkeys{ffra}{#1}}
\RequirePackage{keyval}
\define@key{ffra}{series}{\renewcommand{\ffraseries}{#1}}
\define@key{ffra}{author}{\renewcommand{\ffraauthor}{#1}}
\define@key{ffra}{affiliation}{\renewcommand{\ffraaffiliation}{#1}}
\define@key{ffra}{email}{\renewcommand{\ffraemail}{#1}}
\define@key{ffra}{orcid}{\renewcommand{\ffraorcid}{#1}}
\define@key{ffra}{version}{\renewcommand{\ffraversion}{#1}}
\define@key{ffra}{date}{\renewcommand{\ffradate}{#1}}
\define@key{ffra}{status}{\renewcommand{\ffrastatus}{#1}}
\renewcommand{\maketitle}{\begin{center}\small\textsc{\ffraseries}\par\vspace{0.75em}{\LARGE\bfseries \@title\par}\vspace{0.8em}{\large \ffraauthor\par}\vspace{0.25em}{\small \ffraaffiliation\par\texttt{\ffraemail}\quad ORCID \ffraorcid\par\vspace{0.25em}\ffradate\quad\textbar\quad Version \ffraversion\par}\end{center}\vspace{0.5em}}
\newcommand{\ffraKeywords}[1]{\noindent\textbf{Keywords:} #1\par}
\newcommand{\ffraJEL}[1]{\noindent\textbf{JEL:} #1\par}
\newcommand{\ffraEvidence}[1]{\PackageWarning{foresightflow_research_article}{\string\ffraEvidence is deprecated and renders no text in FFRA-v1.0.3}}

\definecolor{seriesblue}{RGB}{31,78,121}
\definecolor{seriesgray}{RGB}{242,244,247}
\definecolor{serieslight}{RGB}{248,249,251}
\definecolor{seriesdark}{RGB}{52,58,64}
\definecolor{seriesorange}{RGB}{180,96,33}
\usetikzlibrary{fit,calc,decorations.pathreplacing,matrix,backgrounds,patterns}
\tikzset{seriesbox/.style={draw=black!70,rounded corners=2pt,align=center,inner sep=5pt,minimum height=9mm,fill=seriesgray},seriesarrow/.style={-{Latex[length=2.2mm]},thick,draw=black!75}}
\makeatother
\newcommand{\ObsExact}{\mathsf{E}}
\newcommand{\ObsInterval}{\mathsf{I}}
\newcommand{\ObsSnapshot}{\mathsf{S}}
\newcommand{\ObsProxy}{\mathsf{Prx}}
\newcommand{\ObsUnmeasured}{\mathsf{U}}
\newcommand{\ObsConflict}{\mathsf{X}}

\newtheorem{empiricalfinding}{Empirical Finding}
\newtheorem{designprinciple}{Design Principle}
\crefname{designprinciple}{Design Principle}{Design Principles}
\Crefname{designprinciple}{Design Principle}{Design Principles}
\crefname{empiricalfinding}{Empirical Finding}{Empirical Findings}
\Crefname{empiricalfinding}{Empirical Finding}{Empirical Findings}
\usepackage{pgfplots}
\pgfplotsset{compat=1.18}
\usepackage{adjustbox}
\usepackage{etoolbox}
\hypersetup{
  pdftitle={Resolution Is Not Settlement, Part I: Oracle Adjudication and Semantic Governance on Polymarket},
  pdfauthor={Maksym Nechepurenko},
  pdfsubject={Prediction markets; oracle finality; semantic governance; Polymarket},
  pdfkeywords={prediction markets, resolution, settlement, finality, oracle, Polymarket, leverage}
}

\newcommand{\tevent}{t_{\mathrm{event}}}
\newcommand{\tsource}{t_{\mathrm{source}}}
\newcommand{\teligible}{t_{\mathrm{eligible}}}
\newcommand{\trequest}{t_{\mathrm{request}}}
\newcommand{\tproposal}{t_{\mathrm{proposal}}}
\newcommand{\tdispute}{t_{\mathrm{dispute}}}
\newcommand{\treset}{t_{\mathrm{reset}}}
\newcommand{\toracle}{t_{\mathrm{oracle\text{-}final}}}
\newcommand{\tadapter}{t_{\mathrm{adapter\text{-}terminal}}}
\newcommand{\tclarify}{t_{\mathrm{clarification}}}
\newcommand{\sourceinterval}{\mathcal I_q^{S}}
\newcommand{\eligibleinterval}{\mathcal I_q^{E}}
\newcommand{\readyinterval}{\mathcal I_q^{\dagger}}

\newcommand{\Prob}{\mathbb P}
\newcommand{\ind}[1]{\mathbf 1\!\left[#1\right]}
\newcommand{\occupancy}{\Omega}

\newcommand{\code}[1]{\texttt{#1}}

\crefname{designprinciple}{Design Principle}{Design Principles}
\Crefname{designprinciple}{Design Principle}{Design Principles}
\crefname{empiricalfinding}{Empirical Finding}{Empirical Findings}
\Crefname{empiricalfinding}{Empirical Finding}{Empirical Findings}

\newcommand{\tproposalany}{t_{\mathrm{proposal,any}}}
\newcommand{\tproposalterminal}{t_{\mathrm{proposal,terminal}}}
\newcommand{\trequestcreated}{t_{\mathrm{request-created}}}
\newcommand{\tconsume}{t_{\mathrm{consume}}}
\newcommand{\tprotocol}{t_{\mathrm{protocol\text{-}final}}}

\newcommand{\route}{G_q}
\newcommand{\pathflags}{\mathbf Z_q}

\ffraSetup{series={ForesightFlow Research \textperiodcentered{} Event-Linked Perpetuals \textperiodcentered{} Paper 5, Part I},author={Maksym Nechepurenko},affiliation={Research Department of Devnull FZCO, Dubai, UAE},email={maksym@devnull.ae},orcid={0000-0002-9515-8841},version={r0.6.3},date={August 2026},status={}}

\title{Resolution Is Not Settlement, Part I:\\
Oracle Adjudication and Semantic Governance on Polymarket}
\author{\SeriesAuthor\thanks{\SeriesAffiliation}}
\date{\SeriesDate}

\begin{document}
\maketitle
\begin{abstract}
Prediction-market resolution is often reduced to a terminal outcome and one timestamp. That representation is inadequate for leveraged event claims because rule versioning, technical request creation, proposal, dispute, reset, Oracle finality, and adapter terminality are distinct states with different observation precision and balance-sheet consequences. We reconstruct those states for Polymarket using Oracle request generations, rather than economic questions alone, as the unit of adjudication.

The population is frozen at Polygon block 79,721,080 and contains 185,550 initialized adapter-question instances and 350,703 decoded adapter logs. Exact requester-filtered extraction yields 504,332 decoded Oracle lifecycle events comprising 184,148 request creations, 159,447 proposals, 1,604 disputes, and 159,133 settlements. The request-generation accounting closes as 182,671 questions with at least one request plus 1,477 successor generations. Immutable chain identity and deployed request semantics provide exact linkage; no nearest-time, pooled-question, or question-ID-only fallback is used. Unfinished histories remain right-censored.

Request age is not semantic resolution age. Median request-to-first-proposal time is 182 seconds on the legacy route but 176,388--744,151 seconds on modern routes, while post-reset successor proposals arrive within 300--2,909 seconds at the median. The contrast is descriptive of different mechanism start points and is not, by itself, evidence of causal efficiency. Exact stable-ID metadata linkage recovers 104,032 of 185,550 questions (56.07\%), leaves 81,518 unmatched, and produces no ambiguous exact match. The versioned rule layer contains 1,570 updates; exact chain identity timestamps 1,561 of them, including all 1,549 creator-authoritative clarifications. Among 1,711 exact clarification--generation relations, 1,426 occur after request creation but before first proposal. External-source publication and contractual-decidability clocks remain unmeasured population-wide and are not replaced by mechanism timestamps.

The results establish an event-sourced account of Oracle adjudication, semantic governance, and adapter terminality. A separately frozen companion Part~II reconstructs Conditional Tokens payout recording and observed redemption, preserving rather than collapsing the boundary between adjudication, protocol settlement, and holder realization.
\end{abstract}

\ffraKeywords{prediction markets; event-linked perpetual futures; finality; Polymarket.}
\ffraJEL{G13, G14, G18.}

\section{Introduction}
\label{sec:introduction}

An event can be economically over while its market remains institutionally unfinished.  A
match may have ended, an official result may be available, and the eventual payout may appear
obvious to almost every trader.  Yet the corresponding claim can remain inside a mechanism
that still has to identify the governing rule version, accept a proposal, permit a challenge,
process a reset or later adjudication path, make the successful oracle value consumable, and
record a terminal adapter transition.  Compressing those steps into one field such as
\code{resolved\_at} is relatively harmless when the only object of interest is the final answer.
It is not harmless when capital is margined, funding accrues, a liquidation trigger can transfer
value, or governance concentration and rule changes are themselves objects of study.

The distinction in the title is literal.  In this study, \emph{resolution} is the
adjudication process that selects a contractually accepted outcome.  \emph{Settlement} is a
downstream protocol action that makes a payout enforceable against locked collateral.
\emph{Redemption} is a holder-level action that realizes collateral.  These layers may share a
transaction or a block timestamp, but they are different economic states.  Part I reconstructs
the Oracle and adapter layers.  The companion Part II follows the same research programme into
the Conditional Tokens Framework (CTF), payout recording, redeemability, and observed holder
realization under a separately frozen downstream dataset.

The product boundary matters.  Polymarket documents prediction markets and a separate
perpetual-futures product surface.  The latter tracks continuous external underlyings and does
not resolve to a binary event payoff \citep{polymarket_perps_faq_2026,polymarket_perps_markets_2026}.
It is therefore not a deployed test of an event-probability perpetual.  This paper studies the
existing event-resolution stack that any future event-linked leveraged instrument would inherit.

\subsection{Why one resolution timestamp is not enough}

A question can be initialized long before it is contractually decidable.  A proposal can carry an
ignore or too-early value rather than a terminal answer.  A first dispute can reset the question
and create a successor request.  A later dispute can enter a different adjudication branch.  A
creator-authoritative clarification can arrive after technical request creation but before the
first proposal.  Oracle settlement can occur before or inside the transaction in which the
adapter reaches its terminal state.  These are not cosmetic distinctions: they alter the
available next transitions, the time spent in costly states, the identity of actors who can
transfer value, and the evidence that a risk engine could observe in real time.

\begin{figure}[htbp]
\centering
\resizebox{\textwidth}{!}{%
\begin{tikzpicture}[
  node distance=5mm and 7mm,
  every node/.style={font=\footnotesize},
  layer/.style={seriesbox,minimum width=2.45cm,minimum height=1.05cm},
  outside/.style={seriesbox,dashed,minimum width=2.45cm,minimum height=1.05cm,fill=white},
  arr/.style={seriesarrow}
]
\node[outside] (evidence) {External evidence\\and rule eligibility};
\node[layer,right=of evidence] (request) {Technical request\\generation};
\node[layer,right=of request] (proposal) {Proposal, liveness,\\dispute, reset};
\node[layer,right=of proposal] (oracle) {Oracle finality};
\node[layer,right=of oracle] (adapter) {Adapter terminal\\transition};
\node[outside,right=of adapter] (protocol) {Protocol payout\\recording};
\node[outside,right=of protocol] (redeem) {Holder redemption\\and cash realization};
\draw[arr] (evidence) -- (request);
\draw[arr] (request) -- (proposal);
\draw[arr] (proposal) -- (oracle);
\draw[arr] (oracle) -- (adapter);
\draw[arr] (adapter) -- (protocol);
\draw[arr] (protocol) -- (redeem);
\draw[very thick,dashed,seriesblue] ($(adapter.north east)+(3mm,7mm)$) -- ($(adapter.south east)+(3mm,-7mm)$);
\node[font=\scriptsize\bfseries,seriesblue,align=center] at ($(adapter.east)+(3mm,13mm)$) {Part I\\boundary};
\node[font=\scriptsize,align=center] at ($(request.south)!0.5!(adapter.south)+(0,-10mm)$) {Observed on-chain adjudication and adapter consumption};
\node[font=\scriptsize,align=center] at ($(protocol.south)!0.5!(redeem.south)+(0,-10mm)$) {Companion Part II};
\end{tikzpicture}%
}
\caption{Finality layers and the analytical scope boundary between the two companion papers.  Part I reconstructs request-generation adjudication through adapter consumption.  External evidence and contractual decidability are separate semantic clocks; protocol payout recording and holder cash realization are downstream layers studied in Part II.}
\label{fig:finality_layers}
\end{figure}

The boundary in \Cref{fig:finality_layers} is analytical rather than an assertion that every
downstream event occurs in a later block.  Part I uses the adapter-question and exact Oracle
request generation as its primary units.  Part II uses the condition, payout vector, and
redemption event.  The two analyses use distinct denominators and do not pool them.  A transaction-atomic
implementation can produce zero observed wall-clock duration between two logical layers
without making them the same state.

\subsection{Position in the research programme}

The revised first four papers establish the surrounding problem.  Paper~1 proves that ordinary
cryptocurrency-perpetual mechanics are not portable to a financed binary claim held through
an adverse terminal outcome under its stated balance-sheet assumptions.  Paper~2 shows that
risk depends on underlying geometry, temporal structure, settlement structure, and venue
composition.  Paper~3 separates market-price, outcome, and resolution manipulation from
informed trading.  Paper~4 documents extreme fill-side concentration while proving that public
fills do not identify the address-level quote lifecycle
\citep{nechepurenko2026_p1,nechepurenko2026_p2,nechepurenko2026_p3,nechepurenko2026_p4}.
The present paper moves below those overlay designs and reconstructs the adjudication states
they would inherit.

Two later papers use this reconstruction rather than duplicating it.  Paper~6 studies
state-triggered collateral, funding, closing auctions, and conversion into funded outcome
claims.  Paper~9 studies multi-leg instruments when two event legs reach finality at different
times.  The present paper supplies their state variables and empirical clocks, not their
mechanism designs.

\subsection{Research questions}

The paper asks six questions.

\begin{enumerate}
  \item What is the complete frozen population of initialized adapter questions, and how many
        exact oracle request generations are associated with it?
  \item Which proposal, dispute, reset, successor-request, settlement, and adapter-terminal
        paths are observed, and which questions or generations remain right-censored?
  \item Which mechanism intervals are observed exactly, which are interval-observed, and
        which semantic clocks remain unavailable?
  \item How concentrated are proposal and dispute execution addresses, subject to the
        boundary that an address is not a beneficial owner?
  \item How are rules and creator-authoritative clarifications versioned, and where do those
        updates fall relative to request, proposal, dispute, reset, and settlement?
  \item Which claims about leveraged controls follow from the observed state surface, and
        which remain questions for Papers~6 and~9?
\end{enumerate}

\subsection{Main empirical findings}

\begin{empiricalfinding}[A question is not a request generation]
\label{ef:a-question-is-not-a-request-generation}
The cutoff population contains 185,550 initialized adapter-questions and 184,148 exact oracle
request generations.  Reset paths create successor generations without changing the economic
question.  Question-level state alone is therefore insufficient for path reconstruction.
\end{empiricalfinding}

\begin{empiricalfinding}[Technical request age is not semantic resolution age]
\label{ef:technical-request-age-is-not-semantic-resolution-age}
Modern request-to-first-proposal medians range from 176,388 to 744,151 seconds, whereas
reset-to-successor-proposal medians range from 300 to 2,909 seconds.  The first interval can
begin before evidence and contractual decidability; the second begins inside an already active
contested path.  They do not share one economic interpretation.
\end{empiricalfinding}

\begin{empiricalfinding}[Rule state can change after the request already exists]
\label{ef:rule-state-can-change-after-the-request-already-exists}
The versioned rule ledger contains 1,570 observed updates.  Exact chain identity timestamps
1,561 updates, including 1,549 creator-authoritative clarifications.  Among 1,711 exact
clarification--generation relations, 83.34\% occur after request creation but before the first
proposal.
\end{empiricalfinding}

\begin{empiricalfinding}[Exact metadata linkage is substantial but selective]
\label{ef:exact-metadata-linkage-is-substantial-but-selective}
Frozen nested Gamma records produce exact stable-ID matches for 104,032 of 185,550
adapter-questions (56.07\%).  A further 81,518 remain unmatched and none are ambiguous under
the exact-ID rule.  Metadata-conditioned estimates therefore apply to the matched cohort and
retain route-specific attrition.
\end{empiricalfinding}

\begin{empiricalfinding}[The public mechanism reveals some clocks and not others]
\label{ef:the-public-mechanism-reveals-some-clocks-and-not-others}
The data measure request, proposal, dispute, reset, settlement, adapter, and clarification
clocks.  They do not measure population-wide external-source publication or contractual
decidability.  Request-to-proposal duration is therefore a mechanism interval, not a
truth-to-resolution interval.
\end{empiricalfinding}

\subsection{Contributions}

The paper makes seven contributions.

First, it establishes a manifest-bound adapter denominator and exact request-generation
linkage without temporal or pooled-question fallback.  Second, it reconstructs a two-level
multistate lifecycle with right censoring, transient resets, and route-specific competing paths.
Third, it distinguishes exact, block-bounded, interval-observed, and unavailable clocks rather
than forcing every duration into a point estimate.  Fourth, it recovers an exact stable-ID
metadata relation with explicit route-dependent attrition and no ambiguous match.  Fifth, it
restores a versioned rule and clarification layer, showing that technical request creation can
precede the on-chain rule update that shapes later adjudication.  Sixth, it documents
proposal/dispute execution concentration while preserving the distinction between addresses,
access configuration, and beneficial ownership.  Seventh, it translates the state
reconstruction into constraints for later leveraged designs while keeping protocol payout and
redemption estimands in the separately identified companion Part II.

\subsection{Evidence discipline}

The paper separates documentary description, source-code capability, deployed configuration,
and observed historical use.  A function in source does not prove invocation.  A current
whitelist does not prove historical access.  A current metadata page is not projected backward.
An execution address is not treated as a beneficial owner.  A request timestamp is not
substituted for source publication or contractual decidability.

All percentages state their denominator.  Questions unresolved at the cutoff are right-censored
rather than classified as permanent failures.  Later-dispute paths are reported with interval or
upper-bound language when exact Data Verification Mechanism (DVM) result time is
unavailable.  The phrase ``version-locked'' is used for hypotheses and estimators frozen before the
corresponding accepted analysis was executed; the paper does not claim an external
preregistration where none occurred.

\subsection{Roadmap}

\Cref{sec:related} positions the contribution.  \Cref{sec:institution} reconstructs the native
architecture.  \Cref{sec:formal} defines the entity hierarchy, clocks, state process, and
identification results.  \Cref{sec:data} describes the frozen data layers and validity gates.
\Cref{sec:estimands} fixes the empirical design.  \Cref{sec:results_adapter,sec:results_oracle,sec:results_semantic}
report the adapter, oracle, and semantic-governance results.
\Cref{sec:design_implications} states consequences for leveraged design without duplicating
Paper~6.  Integrated appendices contain the operational dictionary, estimand and gate registry,
source audit, proofs, hypothesis disposition, and reproducibility and evidence-lineage material.  A concise
data-and-code availability statement appears after the conclusion.

\section{Related Work}
\label{sec:related}

The paper connects five literatures: prediction-market rules, oracle-mediated adjudication,
layered financial finality, multistate duration analysis, and event-linked leverage.
\subsection{Prediction markets and resolution rules}

The prediction-market literature has primarily studied information aggregation, calibration, and the interpretation of market prices \citep{wolfers_zitzewitz_2004,manski_2006}.  Combinatorial market design expands the state space of claims \citep{hanson_2003}, but empirical work normally reduces settlement to the realized payoff.  For the present paper, the payoff is not a sufficient statistic.  Two markets with the same terminal payout can differ in source-publication delay, proposal latency, dispute/reset path, clarification history, protocol-settlement time, and holder-redemption behavior.

Polymarket's CTF-based markets make several intermediate states observable.  The adapter stores ancillary data, prepares a condition, requests a YES/NO oracle value, and later maps a valid value into a payout vector.  Ordinary terminal values include YES, NO, and UNKNOWN/50--50; an ignore or too-early sentinel causes continuation/reset rather than terminal payout \citep{uma_polymarket_verification_2026,polymarket_uma_adapter_contract_2026}.  This contract structure motivates a process representation rather than a single resolution label.

The resolution rules themselves also have a temporal dimension.  A title is not the governing rule; named sources, end dates, edge cases, and later clarifications determine when and how a terminal answer is available \citep{polymarket_resolution_docs_2026}.  The empirical contribution of Paper~5 is to treat those rule states as versioned inputs rather than as static metadata attached after the fact.

\subsection{The oracle problem, access, and dispute latency}

Blockchains cannot observe off-chain facts without an oracle layer.  The oracle literature emphasizes that data-source selection, feeder access, aggregation, dispute, and final on-chain consumption are separate design modules with distinct trust assumptions and latency costs \citep{eskandari_oracles_2021}.  Binary event outcomes add semantic heterogeneity: the problem is not only whether a feed is numerically accurate, but whether the question, source, and edge-case rule identify the same proposition at each stage.

UMA's Optimistic Oracle accepts a bonded claim if it survives liveness and uses the Data Verification Mechanism (DVM) as a disputed-data backstop \citep{uma_oracle_overview_2026}.  ManagedOptimisticOracleV2 adds permissioned requesters and request-level control over who may propose \citep{uma_moov2_2026,uma_moov2_programmatic_2026}.  The current documentation explains the mechanism and present default-whitelist behavior.  It does not by itself establish historical enforcement for each request.  Paper~5 therefore treats documentation, source-code capability, deployed configuration, and historical use as four distinct evidence layers.

Alternative oracle architectures clarify what Paper~5 does not attempt to replace.  Voting-based systems such as Astraea analyze incentive-compatible decentralized truth determination, while foundational oracle-pattern taxonomies distinguish push/pull and inbound/outbound data flows \citep{adler_astraea_2018,muehlberger_oracle_patterns_2020}.  A newer prediction-market literature evaluates automated and hybrid AI--human resolution, including confidence-based escalation for difficult questions \citep{kota_ai_oracles_2026}.  These designs are relevant comparison points, but the present paper measures the deployed Polymarket--UMA--CTF process rather than proposing a substitute oracle.

We use \emph{oracle finality} operationally.  A request is oracle-final when its successful outcome can no longer be changed through the ordinary challenge route and can be consumed by the requester.  This is not a claim about metaphysical truth, legal finality, or blockchain-consensus finality.  It is a measurable mechanism state that can precede adapter consumption and holder redemption.

\subsection{Settlement finality and stage separation}

Traditional financial-market infrastructure separates execution, clearing, settlement, and final transfer.  The Principles for Financial Market Infrastructures define final settlement around an irrevocable and unconditional transfer or discharge of an obligation and emphasize that rules should identify the point at which settlement becomes final \citep{cpss_iosco_2012}.  Paper~5 does not assert that Polymarket or CTF is legally equivalent to a regulated financial market infrastructure.  It adopts the analytical discipline of separating agreement about an obligation from the point at which the obligation is irreversibly discharged.

The event-market analog has at least three post-decision layers: oracle finality, protocol settlement through payout reporting, and holder redemption.  It also has two pre-finality clocks: public evidence sufficiency and contractual decidability.  Mixing all five into a scalar \texttt{resolved\_at} field can misattribute source delay to the oracle, holder inattention to the protocol, or request reset time to the economic question itself.

The distinction also separates two kinds of post-protocol delay.  \emph{Protocol delay} is the time until the claim becomes redeemable.  \emph{Holder delay} is the time until a particular holder chooses to redeem.  The latter can reflect batching, inattention, custody, gas, or portfolio management; it is not automatically an oracle-performance measure.

\subsection{Multistate models, competing risks, and interval-censored clocks}

The resolution path contains both transient and absorbing states.  A first-dispute reset is not a terminal competing event; it is a transition into a new request.  The natural statistical object is therefore a multistate process, with Aalen--Johansen transition-probability estimates and cause-specific or subdistribution models used only for genuinely absorbing destinations \citep{aalen_johansen_1978,fine_gray_1999,andersen_borgan_gill_keiding_1993}.  Kaplan--Meier summaries remain useful for simple marginal durations but cannot represent the full request chain \citep{kaplan_meier_1958}.

Off-chain source and eligibility times introduce a second statistical issue.  A source may be known only to the minute, within a publication interval, or through an archived page with uncertain update time.  Treating such clocks as exact creates artificial precision.  Turnbull's nonparametric framework for interval-censored data supplies the appropriate marginal-duration estimator when exact times are unavailable \citep{turnbull_1976}.  Paper~5 therefore separates an exact on-chain state backbone from source-conditioned duration analyses and reports endpoint sensitivity where regression software requires point times.

Transition-anchored price and liquidity responses draw on event-study methods \citep{mackinlay_1997}.  The application requires three adaptations.  First, proposal timing is endogenous to information arrival, so proposal effects are descriptive unless stronger identification is available.  Second, protocol finality stops trading; finality studies use one-sided convergence and cessation measures rather than a symmetric post-event trading window.  Third, technical state transitions such as question initialization provide negative controls because they should not systematically reveal terminal information when they occur long before the event.

\subsection{Leveraged event-linked markets}

Paper~1 establishes the non-portability of continuous-underlying margin and funding assumptions to binary event underlyings and documents the two design tensions most relevant here: terminal-jump bad debt is not solved by a halt, and dynamic margin can pre-empt more recoveries than it prevents \citep{nechepurenko2026_p1}.  Paper~2 shows that multi-leg variants add asynchronous finality and oracle-composition risk \citep{nechepurenko2026_p2}.  Paper~3 shows that leverage magnifies both market-price and outcome-manipulation incentives \citep{nechepurenko2026_p3}.  Paper~4 establishes which address-level and book-level supply-side measurements are feasible on Polymarket's hybrid central limit order book (CLOB) \citep{nechepurenko2026_p4}.

The present paper adds the missing state variable.  Earlier programme work conditions risk on price, time to scheduled resolution, volatility, liquidity, and participant behavior.  Paper~5 conditions risk on the realized adjudication state, the occupation time in that state, and the remaining path from evidence to protocol-final collateral.  It also shows why an actor-controlled proposal or dispute cannot be used mechanically as a margin or conversion trigger without an explicit manipulation-cost test.

The oracle literature often compares value secured with the resources available to challenge or corrupt a mechanism, but a derivative overlay adds a distinct accounting issue: aggregate payoff can be zero across counterparties while particular participants acquire large state-contingent gains.  Paper~5 therefore separates native collateral and gross exposure from the maximum private terminal-state range.  This connects oracle-security analysis to Paper~3's leverage-amplification argument without treating open interest, net system value, face bond, or private manipulation gain as interchangeable quantities \citep{eskandari_oracles_2021,nechepurenko2026_p3}.

\section{Institutional Architecture of Polymarket Adjudication}
\label{sec:institution}

This section reconstructs the mechanism from frozen documentation, bytecode-matched source,
the deployment registry, and observed historical use.  A current description, a source-code
capability, a deployed configuration, and an observed invocation are different evidence classes.

\subsection{Evidence layers}
\label{sec:institution:evidence}

Four evidence layers must remain separate throughout the paper.

\begin{enumerate}
  \item \textbf{Current documentary description.}  What Polymarket or UMA currently says the mechanism does.
  \item \textbf{Source-code capability.}  What a particular source revision permits, including branches that may never have been invoked.
  \item \textbf{Deployed configuration.}  Which bytecode, dependencies, access controls, bonds, liveness values, and whitelist settings applied to a particular address or request.
  \item \textbf{Observed historical use.}  Which events and calls actually occurred on-chain.
\end{enumerate}

No claim is promoted from one layer to another without evidence.  For example, a manual-resolution function establishes capability, not use; a current default whitelist establishes present membership, not historical membership; and a user-facing statement that anyone may propose does not establish that every historical ManagedOptimisticOracleV2 request had enforcement disabled.

\subsection{Product boundary: prediction markets and Perps}
\label{sec:institution:boundary}

Polymarket currently documents two economically distinct products.  Prediction markets trade outcome tokens that become redeemable according to a discrete event resolution.  Polymarket Perps are non-expiring contracts on external continuous-price underlyings; funding anchors the mark to an index, and there is no event resolution or 0/1 settlement \citep{polymarket_perps_faq_2026}.  The currently documented instruments span indices, commodities, crypto assets, and an equity-linked underlying rather than event probabilities \citep{polymarket_perps_markets_2026}.

The distinction determines the paper's empirical scope.  We do not reverse-engineer a deployed event-probability perpetual because the public product does not contain one.  We measure the existing event-resolution stack and use the current Perps architecture only as a continuous-reference comparison.

\subsection{Question initialization and request creation}
\label{sec:institution:initialization}

At adapter initialization, the question parameters and ancillary data are stored, a two-outcome condition is prepared on CTF, and a price request is sent to the configured Optimistic Oracle \citep{polymarket_uma_adapter_repo_2026,polymarket_uma_adapter_contract_2026}.  The request is marked event-based.  This yields an important distinction:

\begin{equation}
  \trequest \not\equiv \teligible.
\end{equation}

The request is a technical object created by the adapter.  Contractual decidability is a semantic property of the market rules and named source.  A request may exist long before a terminal answer is permissible.  Any attempt to infer decidability from request creation alone will misclassify early proposals and ignore/too-early responses.

\subsection{Rules, ancillary data, and terminal values}
\label{sec:institution:rules}

Polymarket's current resolution documentation states that each market has a named source, end date, and edge-case rules, and that the rules rather than the title govern resolution \citep{polymarket_resolution_docs_2026}.  At the oracle layer, the YES/NO query convention supports three terminal payout values---YES, NO, and UNKNOWN/50--50---plus an ignore or too-early sentinel that is not terminal \citep{uma_polymarket_verification_2026}.

The terminal support is therefore
\[
  Y\in\{0,\tfrac12,1\},
\]
while the ignore value belongs to the transition system rather than the payout support.  Treating ignore as a fourth payoff would overstate terminal ambiguity; treating unknown/50--50 as a failed resolution would understate the actual payout state space.

The proposed value must be retained in the event table.  The first proposal of any value, the first terminal-valued proposal, and the proposal on the successful request can be three different observations.  A proposal that eventually points in the correct direction but was submitted before contractual decidability remains a premature proposal for quality classification.

\subsection{Proposal, liveness, dispute, and request reset}
\label{sec:institution:proposal}

The public flow is a bonded proposal followed by a challenge period.  An undisputed proposal can be accepted after liveness; a challenged proposal begins a further path that can reach the DVM \citep{polymarket_resolution_docs_2026,uma_oracle_overview_2026}.  The adapter implementation adds a specific request-chain structure.  The first dispute callback resets the question and creates a new oracle request with a new request timestamp; a subsequent dispute can engage the DVM backstop \citep{polymarket_uma_adapter_repo_2026,polymarket_uma_adapter_contract_2026}.

A single economic question can therefore generate a sequence
\[
  \mathcal{R}_q=(r_{q1},r_{q2},\ldots,r_{qK}),
\]
where each request has its own creation, proposal, liveness, dispute, and settlement records.  The reset is a transient transition, not a terminal outcome.  The empirical database must retain predecessor/successor links and cannot select only the latest request.

The source-level description that a dispute ``triggers a new proposal round'' and the adapter-level implementation that creates a new request are compatible descriptions at different abstraction levels.  Paper~5 stores both: question-level round and request-level identity.

\subsection{Managed proposal access}
\label{sec:institution:whitelist}

ManagedOptimisticOracleV2 permits a request to carry a proposer whitelist and an enforcement flag.  If enforcement is false, any address may propose; if true, only allowed addresses may propose.  New requests can inherit a default whitelist unless the request manager overrides the configuration \citep{uma_moov2_2026,uma_moov2_programmatic_2026}.  Current UMA documentation also describes eligibility criteria for the managed default list \citep{uma_moov2_whitelist_2026}.

Polymarket's current user-facing page describes both proposal and dispute as open to anyone \citep{polymarket_resolution_docs_2026}.  We do not resolve the difference by choosing one description.  The contract freeze records the surface, while the historical analysis reconstructs request-level enforcement where the evidence permits.  The data model retains:

\begin{itemize}
  \item oracle family and deployed address;
  \item proposal whitelist address and enforcement status at the relevant request;
  \item request-specific versus inherited default configuration;
  \item allowed-proposer set when historically recoverable;
  \item dispute-access restrictions, if any, as a separate field rather than an assumption;
  \item secured market value and proposal/dispute concentration by access regime.
\end{itemize}

On-chain proposer and disputer addresses identify execution entities, not necessarily beneficial owners.  Concentration measures are therefore address-level mechanism measures unless additional entity-link evidence exists.

\subsection{Clarifications as versioned governance events}
\label{sec:institution:clarifications}

Polymarket documentation states that additional context can be published on-chain and should not change the question's fundamental intent \citep{polymarket_resolution_docs_2026}.  The adapter's bulletin-board mixin permits any address to post an update, while its source comment instructs users to consider updates from the question creator \citep{polymarket_bulletin_board_2026}.  The relevant empirical object is therefore not simply ``a clarification exists.''  It is a versioned sequence containing author, timestamp, content hash, and creator-authoritative status.

For each proposal and dispute, the analysis reconstructs the rule corpus as it existed immediately before the transaction.  A clarification published after a proposal cannot be used as if the proposer had observed it; a current UI rendering cannot replace a historical rule state.

\subsection{Oracle availability, adapter consumption, and the Part~II boundary}
\label{sec:institution:consumption}

When a valid oracle value is available, a caller can invoke the adapter's resolution path.  The
standard source retrieves or settles the successful request, constructs the payout vector, calls
CTF, and emits the adapter terminal event in the same transaction
\citep{polymarket_uma_adapter_contract_2026}.  The logical layers remain distinct:

\begin{align*}
  \toracle &: \text{the successful request value is irreversible and consumable},\\
  \tconsume &: \text{the adapter consumes that value},\\
  \tprotocol &: \text{CTF records the payout and makes the condition redeemable}.
\end{align*}

Part I estimates Oracle adjudication and the terminal adapter transition.  The companion
Part II reconstructs the CTF and redemption layers from a separately frozen fixed-snapshot
dataset.  If adapter consumption and payout recording share one transaction, the wall-clock
gap can be zero while the exact log order still identifies distinct state transitions.  Oracle
availability itself can remain interval-observed when no dedicated event marks the first
consumable instant.

\subsection{Pause, flag, reset, and manual resolution}
\label{sec:institution:admin}

The adapter source exposes administrative functions to pause and unpause, flag and unflag, reset, and manually resolve after a safety period under specified conditions \citep{polymarket_uma_adapter_contract_2026,polymarket_adapter_auth_2026}.  The admin set is mutable in source.  These facts establish a deployed governance surface, not historical frequency or motive.

The access-control audit reconstructs current or historical admin state where events permit, while the event ledger measures actual branch use.  Ordinary optimistic-oracle finality, administrative intervention, and manual payout are reported as separate routes.

\subsection{Documented adapter versions and the need for a freeze}
\label{sec:institution:versions}

A user-facing adapter list is not a complete production registry.  At the frozen source boundary used in this study, Polymarket's official operational resolution subgraph enumerates six initialization-emitting adapter data sources: the legacy adapter, standard V2, standard V3.1, standard V4, a NegRisk adapter, and a NegRisk V4 adapter \citep{polymarket_resolution_subgraph_2026}.  The operational source also binds each data source to an application binary interface (ABI), start block, and handlers for initialization, reset, and resolution.

This creates a registry-closure requirement prior to any transport claim.  A chain scan can be complete for the addresses supplied to it and still be incomplete for the venue if the address set omitted an active route.  Paper~5 therefore distinguishes:
\[
  \text{address-scope completeness}
  \quad\text{from}\quad
  \text{venue-registry completeness}.
\]
The production registry is the reconciled union of user-facing documentation, official operational indexer configuration, deployment/source artifacts, and on-chain code/dependency state.  Documentary labels remain evidence about naming; operational data sources are evidence about expected historical coverage; bytecode and logs determine deployed use.  G-REGISTRY-CLOSURE must pass before initialized-question-weighted mapping coverage or path frequencies are reported.

\subsection{Architecture summary}
\label{sec:institution:summary}

The deployed system is best represented as two linked processes:

\begin{enumerate}
  \item a \emph{question process} carrying rules, clarifications, administrative state, terminal payout, and CTF condition; and
  \item a \emph{request process} carrying oracle creation, proposal value, liveness, dispute, reset, and settlement.
\end{enumerate}

The question process can outlive an individual request, and request resets can occur without changing the economic question.  This two-level architecture is the formal object of \Cref{sec:formal} and the database design of \Cref{sec:data}.

\begin{figure}[htbp]
\centering
\resizebox{0.98\textwidth}{!}{%
\begin{tikzpicture}[
  node distance=7mm and 9mm,
  every node/.style={font=\footnotesize},
  st/.style={seriesbox,minimum width=2.25cm,minimum height=9mm},
  terminal/.style={seriesbox,minimum width=2.25cm,minimum height=9mm,fill=seriesblue!10},
  arr/.style={seriesarrow},
  alt/.style={seriesarrow,dashed}
]
\node[st] (init) {Adapter question\\initialized};
\node[st,right=of init] (req1) {Request generation 1};
\node[st,right=of req1] (prop1) {First proposal};
\node[st,right=of prop1] (live1) {Liveness /\\first dispute};
\node[terminal,right=of live1] (settle1) {Oracle final\\on generation 1};
\node[st,below=14mm of live1] (reset) {Reset};
\node[st,left=of reset] (req2) {Request generation 2};
\node[st,left=of req2] (prop2) {Successor proposal};
\node[st,left=of prop2] (live2) {Liveness /\\later dispute};
\node[terminal,below=14mm of prop2] (settle2) {Oracle final\\after reset / DVM};
\node[terminal,right=of settle2] (adapter) {Adapter terminal};

\draw[arr] (init) -- (req1);
\draw[arr] (req1) -- (prop1);
\draw[arr] (prop1) -- (live1);
\draw[arr] (live1) -- node[above,font=\scriptsize] {undisputed} (settle1);
\draw[alt] (live1) -- node[right,font=\scriptsize] {dispute / ignore} (reset);
\draw[arr] (reset) -- (req2);
\draw[arr] (req2) -- (prop2);
\draw[arr] (prop2) -- (live2);
\draw[arr] (live2) -- (settle2);
\draw[arr] (settle1) |- (adapter);
\draw[arr] (settle2) -- (adapter);

\node[draw,dashed,rounded corners=2pt,fit=(req1)(prop1)(live1)(settle1),inner sep=4pt,label={[font=\scriptsize]above:Generation 1}] {};
\node[draw,dashed,rounded corners=2pt,fit=(reset)(req2)(prop2)(live2)(settle2),inner sep=4pt,label={[font=\scriptsize]below:Successor generation}] {};
\end{tikzpicture}%
}
\caption{Two-level request-generation state machine.  A dispute or ignore/too-early branch can create a successor request without changing the economic question.  Therefore question-level state alone is insufficient for path reconstruction; request generation is a load-bearing state variable.}
\label{fig:request_state_machine}
\end{figure}

\section{Formal Framework: Questions, Requests, Rules, and Clocks}
\label{sec:formal}

\subsection{Economic questions, rule versions, and request chains}

Let \(q\in\mathcal Q\) denote an adapter-level economic question and let \(\rho_q(t)\) denote
the rule state visible at time \(t\), including creator-authoritative clarifications already
recorded by that time.  Let
\begin{equation}
  \mathcal R_q=(r_{q1},\ldots,r_{qK_q})
  \label{eq:request_chain}
\end{equation}
be the ordered chain of oracle request generations associated with \(q\).  A reset can increase
\(K_q\) without changing the economic question.

The terminal YES payout is
\begin{equation}
  Y_q\in\left\{0,\tfrac12,1\right\}.
  \label{eq:payout_support}
\end{equation}
The value \(1/2\) is a terminal split/unknown payout.  Ignore and too-early values belong to
the transition system rather than the terminal support.

For request \(r_{qk}\), let \(v_{qk}\) denote the proposed value.  Define
\begin{align}
  \tproposalany(q)
  &=\min_k t^{(k)}_{\mathrm{proposal}},\\
  \tproposalterminal(q)
  &=\min_{k:\,v_{qk}\in\{0,1/2,1\}}
    t^{(k)}_{\mathrm{proposal}},
\end{align}
when the relevant set is non-empty.  The first proposal of any value, the first
terminal-valued proposal, and the proposal on the successful request can therefore be three
different events.

\subsection{Clock observation as points, ordered events, or intervals}

The mechanism contains exact chain events, block-bounded technical clocks, and potentially
interval-observed semantic clocks:
\begin{align*}
  \tevent &: \text{rule-relevant real-world occurrence or completion},\\
  \tsource &: \text{earliest reproducible publication of sufficient evidence},\\
  \teligible &: \text{earliest time the contemporaneous rule permits a terminal answer},\\
  \trequest &: \text{technical request creation},\\
  \tproposal &: \text{proposal submission},\\
  \tdispute &: \text{dispute submission},\\
  \treset &: \text{reset and successor-request creation},\\
  \toracle &: \text{irreversible consumable oracle outcome},\\
  \tadapter &: \text{terminal adapter transition},\\
  \tclarify &: \text{creator-authoritative on-chain clarification}.
\end{align*}

When a semantic clock is not exact, it is represented by a closed interval:
\begin{align}
  \sourceinterval&=[\underline t^S_q,\overline t^S_q],\\
  \eligibleinterval&=[\underline t^E_q,\overline t^E_q].
  \label{eq:semantic_intervals}
\end{align}
The decision-ready interval is
\begin{equation}
  \readyinterval=
  \left[
    \max\{\underline t^S_q,\underline t^E_q\},
    \max\{\overline t^S_q,\overline t^E_q\}
  \right].
  \label{eq:ready_interval}
\end{equation}
A proposal before the lower endpoint is unambiguously early; a proposal inside the interval is
timing-ambiguous; and a proposal after the upper endpoint is unambiguously post-readiness.
Midpoint substitution is not a primary estimator.

Transactions in one block are ordered by transaction and log index even when their wall-clock
timestamp is equal.  Part I records that exact secondary order.  Missing semantic clocks are
not reconstructed from request, proposal, settlement, adapter, metadata-retrieval, or
clarification timestamps.

\begin{definition}[Technical request context]
\label{def:technical-request-context}
A technical request context begins at creation of request generation \(r_{qk}\).  It establishes
that a request exists; it does not by itself establish evidence sufficiency, contractual
decidability, or active adjudication.
\end{definition}

\begin{definition}[Active adjudication generation]
\label{def:active-adjudication-generation}
An active adjudication generation is a request generation with a proposal or post-dispute
successor path for which liveness, challenge, reset, later adjudication, or settlement transitions
are being evaluated under the contemporaneous rule state.
\end{definition}

\begin{definition}[Oracle finality]
\label{def:oracle-finality}
Oracle finality occurs when the successful request outcome can no longer be changed through
the ordinary challenge path and is consumable by the requester under the deployed mechanism.
\end{definition}

\begin{definition}[Adapter terminal transition]
\label{def:adapter-terminal-transition}
The adapter terminal transition is the first verified terminal transition emitted by the
requesting adapter for question \(q\).  It is the terminal boundary of Part I, not a synonym for
protocol payout recording or holder redemption.
\end{definition}

\subsection{Routes, payout subtype, and non-exclusive path flags}

A single route label cannot encode every relevant path feature.  We use:
\begin{enumerate}
  \item \(\route\), the terminal route;
  \item \(Y_q\), the terminal payout subtype; and
  \item \(\pathflags\), non-exclusive flags for dispute, reset, clarification, pause/flag,
        access restriction, order indeterminacy, and non-standard behavior.
\end{enumerate}

The route set is
\begin{equation}
  \route\in\{G_1,G_2,G_D,G_M,G_C\},
\end{equation}
where \(G_1\) is first-generation oracle finality, \(G_2\) is successor-generation finality,
\(G_D\) is later-adjudication/DVM finality, \(G_M\) is administrative/manual finality, and
\(G_C\) is right-censored or unresolved non-standard status.  This representation prevents a
reset from being treated as an absorbing outcome and prevents a split payout from being
conflated with failure.

\subsection{Two-level state process}

Let \(A_q(t)\) denote the question-level state and \(R_{qk}(t)\) the state of request
generation \(k\).  A request generation can follow:
\begin{equation}
\text{Requested}\rightarrow\text{Proposed}\rightarrow
\begin{cases}
\text{OracleFinal},\\
\text{Disputed}\rightarrow\text{Reset/NewRequest},\\
\text{Disputed}\rightarrow\text{LaterAdjudication}\rightarrow\text{OracleFinal},\\
\text{Ignore/TooEarly}\rightarrow\text{Reset/NewRequest}.
\end{cases}
\label{eq:request_path}
\end{equation}
The question-level process additionally carries rule-version marks, clarifications,
administrative states, and adapter terminality.

The process can be described by transition intensities
\begin{equation}
\lambda_{ij}(t\mid X_q,\rho_q(t))
=\lim_{\Delta t\downarrow0}
\frac{\Prob\{S_q(t+\Delta t)=j\mid S_q(t)=i,X_q,\rho_q(t)\}}{\Delta t}.
\label{eq:transition_intensity}
\end{equation}
This notation does not assert a structural homogeneous Markov model; it states that duration
and destination must be modeled jointly.

\begin{proposition}[Request history is required for a Markov state]
\label{prop:request_history}
Suppose a dispute on the first request can create a successor request while a dispute on the
successor can expose a different later-adjudication branch.  A coarse state representation that
retains only the current question state but omits active request generation and equivalent reset
history is not Markov: two admissible histories can have the same coarse state and different
feasible next transitions.
\end{proposition}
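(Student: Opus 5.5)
The proof is a construction of two admissible histories, followed by the standard argument that a Markov process's feasible transitions depend only on its current state. First I fix the coarse state map. Let \(C_q(t)\) be the question-level state \(A_q(t)\): initialized, an active request in the Proposed or Disputed sub-state, rule-version marks, administrative state. It carries no generation index \(k\) and no record of how many resets have occurred. I will say a process \(S_q\) is Markov only if its set of feasible next transitions, meaning the support of the intensities \(\lambda_{ij}\) in \eqref{eq:transition_intensity}, is a function of the current state alone. The contrapositive is what I use: if two histories share the same current state but have different feasible next transitions, the coarse state cannot be Markov.

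Next I construct the two histories from the request path \eqref{eq:request_path}. History \(h_1\): the question is initialized, \(r_{q1}\) is requested and proposed, and the proposal is currently in liveness. History \(h_2\): \(r_{q1}\) is requested, proposed and disputed, a reset creates \(r_{q2}\), and \(r_{q2}\) is proposed and currently in liveness. At the evaluation time I take both histories to have the same rule state \(\rho_q(t)\), say with no clarification in either, and the same administrative state. Then \(C_q(t)\) is identical: an active proposal in liveness. Both histories are admissible under the adapter semantics of \cref{sec:institution:proposal}.

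Then I compare what a dispute does in each. Under \(h_1\), a dispute is a first dispute, so it leads to Reset/NewRequest. Under \(h_2\), the hypothesis of the proposition says a dispute on the successor leads to LaterAdjudication (DVM), not to a further reset. So the destination state \(j\) following a dispute differs: some transition has positive intensity from one history and zero intensity from the other, even though the coarse state is the same. This contradicts the Markov requirement.

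Finally I check that the failure comes from exactly the omitted information. I would show that adding the generation index \(k\), or an equivalent reset count, separates \(h_1\) from \(h_2\), because the feasible set depends only on whether \(k=1\) or \(k\ge2\). This explains the statement's phrase ``equivalent reset history.'' The main obstacle is making precise which coarse representations count as omitting this information. I would state it as: any coarse map that sends \(h_1\) and \(h_2\) to the same value fails. This covers every representation that does not recover \(\min(k,2)\), without needing to enumerate representations.
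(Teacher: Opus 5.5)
Your proposal is correct and follows essentially the same route as the paper's proof: both histories end in a live proposal, one on the first request and one on a post-reset successor, so they share the coarse state \emph{Proposed}, yet an identical dispute leads to a successor request in one and to the later-adjudication branch in the other. One minor point: your closing gloss overstates what the construction shows, since the precise conclusion is that any coarse map identifying $h_1$ and $h_2$ fails, not that every map failing to recover $\min(k,2)$ on all histories fails; nothing in the argument depends on this.
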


For transient state \(j\), define occupation over \([a,b]\):
\begin{equation}
  \occupancy_{qj}(a,b)=\int_a^b\ind{S_q(t)=j}\,dt.
  \label{eq:state_occupation}
\end{equation}
Occupation time can affect funding, collateral, liquidity exposure, and operational cost even
when two paths share the same terminal payout and terminal adapter timestamp.

The following result extends the path-projection logic of Paper~1.  That paper shows that
terminal account risk cannot be inferred from a continuous-underlying margin proxy; the
mechanism-specific result below shows that a terminal payout and timestamp cannot recover
adjudication-state occupation.

\begin{proposition}[A scalar terminal timestamp is insufficient]
\label{prop:scalar_timestamp}
Let two admissible paths have the same terminal payout and the same adapter terminal timestamp
but different state-occupation vectors.  For any path-dependent functional
\begin{equation}
  Z_q(P)=z_0(Y_q,\tadapter)
  +\sum_{j\in\mathcal J}h_j\occupancy_{qj}(P),
  \label{eq:path_functional}
\end{equation}
if the occupation-vector difference is not orthogonal to
\(\mathbf h=\bigl(h_j\bigr)_{j\in\mathcal J}\), no function of the scalar pair
\((Y_q,\tadapter)\) can recover \(Z_q(P)\) for both paths.
\end{proposition}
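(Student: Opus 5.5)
The argument is a direct contradiction from the linear form of the functional in \eqref{eq:path_functional}. Fix two admissible paths \(P\) and \(P'\) for question \(q\) with common terminal payout \(Y_q=Y'_q=y\) and common adapter terminal timestamp \(\tadapter=\tadapter'=\tau\). Write their occupation vectors as \(\bm\Omega=(\occupancy_{qj}(P))_{j\in\mathcal J}\) and \(\bm\Omega'=(\occupancy_{qj}(P'))_{j\in\mathcal J}\), each computed by \eqref{eq:state_occupation} over the same window. The hypothesis then reads \(\mathbf h\cdot(\bm\Omega-\bm\Omega')\neq 0\).

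First, I subtract the two evaluations of \eqref{eq:path_functional}. The baseline term \(z_0\) depends only on the scalar pair, and that pair is identical across the two paths. It therefore cancels, leaving
\[
Z_q(P)-Z_q(P')=\sum_{j\in\mathcal J}h_j\bigl(\occupancy_{qj}(P)-\occupancy_{qj}(P')\bigr)=\mathbf h\cdot(\bm\Omega-\bm\Omega')\neq 0 .
\]

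Second, suppose some map \(f\) satisfies \(Z_q(P)=f(Y_q,\tadapter)\) for both paths. Then \(Z_q(P)=f(y,\tau)=Z_q(P')\), which contradicts the nonzero difference above. No such \(f\) exists, and this holds for every \(f\), measurable or not, because \(f\) is evaluated at a single point.

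The main obstacle is showing that the statement is non-vacuous rather than establishing the implication itself. I would exhibit two admissible paths under \eqref{eq:request_path} that share \((y,\tau)\) but have different occupation vectors.

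\begin{itemize}
\item Path \(P\) follows Requested \(\to\) Proposed \(\to\) OracleFinal on generation~1, with a long Requested phase.
\item Path \(P'\) follows Requested \(\to\) Proposed \(\to\) Disputed \(\to\) Reset/NewRequest \(\to\) Proposed \(\to\) OracleFinal, with its holding times chosen to end at the same \(\tau\) and the same value \(y\).
\end{itemize}

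Take the states in order of first occupation. \(P'\) has positive occupation in Disputed and Reset, and \(P\) has none in either state. The difference vector is therefore nonzero, and for any \(\mathbf h\) with \(h_{\mathrm{Disputed}}\neq 0\) the holding times can be tuned so that the inner product is nonzero. Finally, I would note that this is the mechanism-specific analogue of the path-projection argument attributed to Paper~1. The map from paths to \((Y_q,\tadapter)\) is non-injective on occupation vectors, and any functional that is not constant on its fibres cannot factor through it.
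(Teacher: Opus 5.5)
Your argument is correct and is essentially the paper's proof: on the common pair $(Y_q,\tadapter)$ the baseline $z_0$ cancels, the difference reduces to $\sum_{j}h_j\bigl[\occupancy_{qj}(P)-\occupancy_{qj}(P')\bigr]\neq 0$, and any function of that pair must assign both paths the same value. The non-vacuity example goes beyond what the paper offers, but its claim for every $\mathbf h$ with $h_{\mathrm{Disputed}}\neq 0$ is too strong: if $\mathbf h$ is constant on the states both paths visit and the paths share the same occupation window, the total occupations coincide and the inner product is zero for any holding times; also, on modern adapters the first-dispute reset is transaction-atomic, so generation-1 Disputed occupation is not freely tunable. None of this affects the proof of the stated implication.
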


\subsection{Latency and occupation estimands}

For exact technical clocks, relevant mechanism intervals include:
\begin{align}
 L_{qk}^{RP}&=t_{qk}^{\mathrm{proposal}}-t_{qk}^{\mathrm{request}},\\
 L_{qk}^{PD}&=t_{qk}^{\mathrm{dispute}}-t_{qk}^{\mathrm{proposal}},\\
 L_{qk}^{PS}&=t_{qk}^{\mathrm{settle}}-t_{qk}^{\mathrm{proposal}},\\
 L_q^{OA,Q}&=\tadapter-\toracle.
\end{align}
Reset-to-successor-proposal is measured on the successor generation rather than merged into
the original request age.

When source and eligibility are observed, semantic durations are interval-censored relative to
\(\readyinterval\).  When they are unavailable, the technical intervals remain valid mechanism
measurements but do not become truth-to-resolution latency.

The on-chain adjudication duration decomposes through state occupation:
\begin{equation}
  \tadapter-\trequestcreated
  =
  \sum_{j\in\mathcal J_{\mathrm{onchain}}}
  \occupancy_{qj}(\trequestcreated,\tadapter).
  \label{eq:onchain_occupation_decomposition}
\end{equation}

\subsection{Proposal quality and convergence}

Proposal quality is coded at request level:
\begin{equation}
Q_{qk}\in
\{\mathrm{premature},\mathrm{terminal\ correct},
\mathrm{terminal\ incorrect},\mathrm{unknown},\mathrm{censored}\}.
\label{eq:proposal_quality}
\end{equation}
A proposal is premature when contemporaneous rules require ignore or continuation, even if
its direction later equals the terminal payout.  A disputed proposal is not automatically
incorrect, and ex-post directional agreement does not establish timeliness.

Let \(p_{q,t}\in[0,1]\) be an observed YES-token price and \(Y_q\) the terminal payout.  The
terminal convergence error is
\begin{equation}
  e_{q,t}=|p_{q,t}-Y_q|.
  \label{eq:terminal_error}
\end{equation}
This is a hindsight-aligned descriptive quantity.  It can reflect rule ambiguity, disagreement,
capital lock-up, fees, inventory constraints, stale quotes, and finality delay; it is not
interpreted as a pure oracle premium without a stronger identification design.

\subsection{Observation grades}
Every clock or transition receives one coordinate-specific grade:
\begin{description}[style=nextline,leftmargin=2.6cm]
\item[$\ObsExact$ --- Exact] Exact event or primary-source time with deterministic identity and, where relevant, exact chain order.
\item[$\ObsInterval$ --- Interval] The target coordinate is bounded by exact lower and upper observations but is not point-identified.
\item[$\ObsSnapshot$ --- Snapshot] A current or terminal representation is observed without the historical transition time.
\item[$\ObsProxy$ --- Proxy] A documented substitute is available but is not functionally identical to the target coordinate.
\item[$\ObsUnmeasured$ --- Unmeasured] Available evidence supports neither a point nor an interval.
\item[$\ObsConflict$ --- Conflicting] Load-bearing sources disagree; the coordinate is quarantined from primary estimation.
\end{description}
Grades are clock-specific rather than question-wide.

\section{Data, Population Construction, and Identification}
\label{sec:data}

\subsection{Relational unit of observation}

The database does not use ``market'' as a universal key.  It distinguishes the economic
event, venue listing, adapter question, CTF condition, oracle request generation, and each
proposal/dispute/settlement event.  The frozen denominator is an adapter-question population,
not automatically a unique economic-market population across migration, wrapping, or duplicate
listing surfaces.

\subsection{Event sourcing and bitemporal provenance}

The analysis is event-sourced.  Current state is derived from an append-only sequence rather
than treated as the only historical record.  Every accepted transition retains
\begin{equation}
(\text{chain ID},\text{block hash},\text{transaction hash},\text{log index})
\label{eq:canonical_event_identity}
\end{equation}
together with source contract, decoded payload, ABI/source hash, and processing lineage.
Valid time and retrieval/processing time remain separate.  A clarification, metadata object,
or configuration observed later is not projected backward without historical evidence.

\subsection{Frozen adapter population}

The preserved adapter archive contains 351,734 raw logs.  After the cutoff at block
79,721,080, 350,703 logs remain.  Canonical identity, duplicate/conflict checks, ABI decoding,
and a network-disabled rebuild establish the accepted adapter layer.

The denominator contains 185,550 initialized adapter questions.  Questions without an observed
terminal adapter event at the boundary are right-censored.

\subsection{Registry closure and version-aware linkage}

Transport completeness is conditional on the registry supplied to the scanner.  The registry is
closed before population inference through deployment/source artifacts, bytecode/proxy
relationships, official operational indexer configuration, ABI/event topics, and observed
adapter-address/topic pairs.

Adapter-to-oracle linkage is version-aware.  Each request generation is identified through
exact requester, identifier, timestamp, ancillary-data, and family-specific fields.  No
nearest-time, pooled-question, or question-ID-only fallback is used.

\subsection{Oracle transport and exact lifecycle reconstruction}

Oracle events are acquired through exact filters:
\begin{equation}
  \text{oracle address}
  +\text{event topic0}
  +\text{indexed requester adapter}.
  \label{eq:oracle_filter}
\end{equation}
Ranges are recursively split until every accepted terminal response is below the provider cap.
The accepted lifecycle layer covers 38 streams and 808 terminal ranges.  Thirty-five exact
archive-provider comparisons reproduce selected identities and payloads.

The lifecycle table contains \code{RequestPrice}, \code{ProposePrice},
\code{DisputePrice}, and \code{Settle}.  It preserves exact ABI integers and immutable event
identity.  Sixteen order-indeterminate generations remain in the canonical data but are excluded
from estimands requiring a unique generation order.

\subsection{Rules, clarifications, and metadata}

Every cutoff question has an initial on-chain rule record.  Updates are stored as versioned marks
with author, timestamp, content hash, and creator-authoritative status.  Proposal and dispute
records join to the latest valid rule state strictly before their exact chain order.

Metadata is joined only by exact stable identifier.  Re-parsing nested Gamma event pages
raises exact coverage to 104,032 of 185,550 adapter-questions (56.07\%); 81,518 remain
unmatched and none are ambiguous under the exact-ID rule.  Unmatched questions remain
unmatched rather than being title-, slug-, date-, or time-matched.  Because coverage is strongly
route-dependent, metadata-conditioned results report route composition and do not treat the
matched cohort as missing at random.

\subsection{Source and evidence hierarchy}

Evidence is ordered as historical on-chain state; bytecode-matched pinned source; signed
deployment/governance record; frozen official documentation; mutable current documentation;
and secondary reporting.  Mechanism statements are tagged as documented, capable, configured,
observed, or inferred.

\subsection{Validity gates}

\begin{description}[style=nextline,leftmargin=2.2cm]
  \item[G-REGISTRY] Adapter, oracle, dependency, ABI, and start-block registry closes.
  \item[G-ADAPTER] Adapter payload integrity, cutoff coverage, decoding, and denominator pass.
  \item[G-ORACLE] Request-generation linkage, lifecycle coverage, and canonical identity pass.
  \item[G-RULE] Contemporaneous rule versions and authoritative updates are recoverable for the
        stated cohort.
  \item[G-METADATA] Exact stable-ID metadata coverage is reported with route-dependent missingness.
  \item[G-CLOCK] Every clock used in a duration claim is exact, block-bounded, interval-observed,
        or explicitly unmeasured.
  \item[G-ACCESS] Historical access comparisons are limited to requests with observed configuration.
  \item[G-REBUILD] Reported analytical artifacts reproduce under a network-disabled rebuild.
\end{description}

A failed gate narrows the claim; it does not authorize a substitute observable chosen after
results are known.  The final network-disabled rebuild compares 45 reported analytical artifacts
and reports zero failures.

\subsection{Missingness and censoring}

No terminal event by the cutoff is right censoring, not permanent non-resolution.  Legacy
questions for which the modern request event is not applicable are
\code{NOT\_APPLICABLE\_NO\_REQUEST}.  Missing oracle evidence is not an undisputed path.
Unknown/split payout is a terminal subtype, not a data failure.  Reset is transient.

\subsection{Boundary to the downstream protocol dataset}

Part I's accepted population and estimands stop at the adapter terminal transition.  The
companion Part II uses a separately frozen Conditional Tokens event population to study payout
recording and redemption.  Completion of that downstream acquisition does not alter Part I's
question, request-generation, rule, or clarification denominators, and no downstream aggregate
is inserted here as a substitute for an Oracle-layer result.

\subsection{Cost and storage discipline}

The historical extraction exposed a methodological failure: sparse address-only queries against
a time-partitioned public warehouse can scan tens of terabytes while returning a small research
table.  The accepted dataset retains the resulting evidence and audit record, but the acquisition
lesson is general.  Registry closure, ABI-derived event selection, dry-run byte limits, and
claim-targeted extraction precede expensive transport.

The reproducibility companion records query coordinates, provider, range ledger, input/output bytes,
row counts, schemas, hashes, and manual interventions.  Acquisition history belongs in the
reproducibility record rather than the main empirical narrative.

\subsection{Ethics and disclosure}

The analysis uses public on-chain identifiers but does not attempt off-chain deanonymization.
Address concentration is reported as execution-address concentration.  Aggregate reporting avoids ranked address lists where aggregation is sufficient.  The author discloses that the analysis forms part of a broader event-linked leverage research
programme; this paper does not claim production safety for an associated design.

\section{Estimands and Empirical Design}
\label{sec:estimands}

\subsection{Nested observation universes}

The paper uses the full frozen adapter-question population, the subset with exact oracle
request-generation linkage, and estimand-specific subsets satisfying the required rule, clock,
access, or metadata gate.  Every result reports its denominator and attrition.

\subsection{Population and path census}

Experiment E1 constructs adapter-family counts, initialized-question denominator, terminal and
right-censored adapter states, exact request-generation multiplicity, proposal/dispute/reset
paths, terminal payout subtype, and non-exclusive path flags.  These are population facts.

\subsection{On-chain multistate estimands}

The exact backbone supplies request-to-first-proposal, proposal-to-first-dispute,
proposal-to-ordinary threshold, reset-to-successor proposal, proposal/dispute-to-settlement,
and oracle-to-adapter intervals where identifiable.  Right-censored generations remain in the
risk set.

\subsection{Semantic-clock estimands}

When source and eligibility intervals are observed, proposal and finality are measured relative
to \(\readyinterval\).  In the accepted population-wide layer, those semantic clocks remain
unmeasured.  The corresponding result is negative identification: technical mechanism clocks
cannot substitute for missing external clocks.

\subsection{Rule and clarification estimands}

The semantic layer measures initial rule coverage, update authorship and timing, clarification
order relative to request/proposal/dispute/reset/settlement, and exact metadata attrition.
Clarification timing establishes order, not causal or retroactive effect.

\subsection{Access, concentration, and proposal quality}

Proposal and dispute addresses are measured separately.  Concentration statistics include top
shares and Herfindahl--Hirschman index.  Historical access comparisons are limited to requests
with observed configuration.  Proposal quality distinguishes premature/ignore, terminal
correct, terminal incorrect, split/unknown, and censored categories.

\subsection{Transition-anchored microstructure boundary}

The integrated master specified price, spread, depth, staleness, and convergence studies around
proposal and finality transitions.  Part I retains the estimand definitions and negative controls
but does not promote unexecuted market-window analyses into results.  A nonexistent
post-terminal order book is not coded as zero spread or depth.

\subsection{Version-locked hypotheses}

The original integrated paper fixed directional hypotheses before the final analysis layers.
After the split, their disposition is explicit in \Cref{app:hypotheses}.  Hypotheses requiring
external clocks, price windows, value denominators, or mechanism replay are marked unmeasured,
transferred, or pending rather than rewritten after inspection.

\subsection{Uncertainty and multiplicity}

Population counts are exact.  Duration distributions use medians, quantiles, restricted means,
or interval-aware estimators as appropriate.  Secondary associations cluster by economic event
where multiple questions share one real-world event.  Multiple-testing adjustment applies to
families of secondary transition outcomes, not to primary population facts.

\section{Empirical Results I: Adapter Population and State Paths}
\label{sec:results_adapter}

This section reports the adapter population and adapter-emitted state paths through the frozen historical cutoff.  Proposal, dispute, later-adjudication, access-control, and oracle-settlement events are joined in \cref{sec:results_oracle}.  The unit is an adapter-address/question-ID pair.  It is not automatically a unique economic market across adapter migration or wrapping.

\subsection{Source preservation, canonical identity, and cutoff}
\label{sec:results_adapter:coverage}

The frozen source archive contains 80 Parquet exports with 351,734 raw adapter logs.  Every payload passed size, checksum, schema, and row-count checks.  One file retains a cloud-generation provenance warning despite payload equivalence; the local SHA-256 manifest is therefore the post-cloud identity root.  An independent content-addressed copy-and-reopen audit reproduced all 80 files and all 351,734 rows.

Canonicalization uses $(\text{chain id},\text{block hash},\text{transaction hash},\text{log index})$.  The canonicalization audit found 351,734 active unique chain logs, no removed rows, no duplicate collapse, and no conflicting canonical identity.  The cutoff analysis includes 350,703 logs at or below Polygon block 79,721,080.  The maximum observed cutoff log is block 79,721,066 at 2025-11-30 23:59:30 UTC.  NegRisk V4 is excluded because its operational activation is later than the cutoff.  The other five adapter families have completed monthly transport evidence through the boundary.

The decoder observed 32 adapter-address/topic pairs.  All 350,703 cutoff events decoded successfully under the frozen ABI registry; there were no unknown pairs and no decode failures.  A clean network-disabled rebuild compared 34 outputs and found no byte-hash mismatch.  These results establish the adapter-log population backbone but do not establish oracle-event completeness.

\subsection{Initialized adapter-question denominator}
\label{sec:results_adapter:population}

The cutoff denominator contains 185,550 initialized adapter questions.  \Cref{tab:adapter_population} reports the population by adapter family.  Terminal-observed and unresolved columns are states at the cutoff.  They do not measure ultimate resolution success or failure because questions initialized near the boundary have less time to reach a terminal adapter event.

\begin{table}[htbp]
\centering
\caption{Initialized adapter-question population through Polygon block 79,721,080.  ``Terminal'' is an adapter-log terminal event by the cutoff; ``right-censored'' is not an ultimate failure classification.  Counts reproduce the manifest-locked adapter population output.}
\label{tab:adapter_population}
\scriptsize
\begin{tabular}{@{}llrrrr@{}}
\toprule
Adapter & Route & Initialized & Resets & Terminal & Right-censored \\
\midrule
NegRiskUmaCtfAdapter & NegRisk & 76,271 & 606 & 59,234 & 17,037 \\
UmaCtfAdapterOld & Standard & 5,829 & 73 & 2,941 & 2,888 \\
UmaCtfAdapterV2 & Standard & 28,600 & 590 & 26,900 & 1,700 \\
UmaCtfAdapterV31 & Standard & 13,379 & 85 & 12,667 & 712 \\
UmaCtfAdapterV4 & Standard & 61,471 & 124 & 54,378 & 7,093 \\
\bottomrule
\end{tabular}
\end{table}

The terminal-state attrition table contains 156,116 \texttt{QuestionResolved} paths, four \texttt{QuestionSettled} paths without a later adapter terminal event in the cutoff ledger, and 29,430 questions unresolved at the cutoff.  The last category is treated as right-censored in the exact request-generation analysis.  In particular, it must not be divided by the denominator and described as a ``non-resolution rate'' without an age-at-cutoff risk set.

\subsection{Observed event families and adapter paths}
\label{sec:results_adapter:paths}

\Cref{tab:event_inventory} reports the decoded event-family counts.  The population contains 1,478 reset events and 1,570 ancillary-data-update events, alongside 17 pause-or-flag events and ten administrative events.  These are observed invocations, unlike source-code capabilities that may never have been exercised.

\begin{table}[htbp]
\centering
\caption{Decoded adapter-event inventory at the historical cutoff.  All 32 observed adapter-address/topic pairs decoded under frozen ABIs; no unknown topic or decode failure remained.}
\label{tab:event_inventory}
\small
\begin{tabular}{lr}
\toprule
Event classification & Event count \\
\midrule
Initialization & 185,550 \\
Resolution or settlement & 159,048 \\
Dependency or non-state & 3,030 \\
Ancillary-data update & 1,570 \\
Reset & 1,478 \\
Pause or flag & 17 \\
Administrative & 10 \\
\bottomrule
\end{tabular}
\end{table}

The adapter-level state surface is dominated by a direct initialization-to-resolution path in both standard and NegRisk routes, but the ledger also retains request-forwarding, reset, ancillary-update, pause/flag, and unresolved paths.  \Cref{tab:top_adapter_paths} reports the most frequent exact sequences.  A reset is a transient event, not an absorbing outcome; the same question key can therefore have a longer successor-request path once oracle events are joined.

\begin{table}[htbp]
\centering
\caption{Most frequent adapter-state paths through the cutoff.  The table reports exact path counts from the manifest-locked adapter output; oracle events are reported separately and are not included in these adapter-only path counts.}
\label{tab:top_adapter_paths}
\small
\begin{tabular}{llr}
\toprule
Route & Adapter-event sequence & Questions \\
\midrule
Standard & Initialized $\rightarrow$ Resolved & 92,964 \\
NegRisk & Initialized $\rightarrow$ Resolved & 57,919 \\
NegRisk & Initialized; no later adapter event by cutoff & 16,856 \\
Standard & Initialized; no later adapter event by cutoff & 12,263 \\
Standard & Initialized $\rightarrow$ ResolutionDataRequested $\rightarrow$ Settled $\rightarrow$ Resolved & 2,853 \\
Standard & Initialized $\rightarrow$ Reset $\rightarrow$ Resolved & 626 \\
NegRisk & Initialized $\rightarrow$ AncillaryDataUpdated $\rightarrow$ Resolved & 615 \\
NegRisk & Initialized $\rightarrow$ Reset $\rightarrow$ Resolved & 513 \\
\bottomrule
\end{tabular}
\end{table}

The adapter-only sequence cannot tell whether a reset followed an incorrect proposal, a too-early value, an administrative intervention, or another oracle branch.  Nor does it reveal who proposed or disputed, when the ordinary liveness threshold became irreversible, or when a DVM result became available.  Those quantities require the oracle lifecycle rather than inference from adapter order alone.

\subsection{Censoring and transition to the Oracle layer}
\label{sec:results_adapter:censoring}

The historical cutoff is a research boundary, not the end of every question's economic life.  The 29,430 questions without an observed terminal adapter event therefore enter the Oracle analysis as right-censored question-level paths rather than as ultimate failures.  The adapter layer fixes the denominator and exact reset/terminal evidence; \cref{sec:results_oracle} then joins raw \texttt{RequestPrice}, \texttt{ProposePrice}, \texttt{DisputePrice}, and \texttt{Settle} events to request generations.

This separation is substantive.  The adapter population provides complete question-level coverage through a declared boundary, while the Oracle layer supplies request-generation multiplicity, first-proposal timing, dispute/reset branches, ordinary liveness thresholds, DVM-inclusive intervals, and address-level governance.  Neither layer is allowed to impute the clocks or states that belong to the other.

\section{Empirical Results II: Oracle Requests, Disputes, and Finality Paths}
\label{sec:results_oracle}

This section reports the second manifest-locked empirical layer.  It joins the complete adapter-question population of \cref{sec:results_adapter} to raw events emitted by the legacy Optimistic Oracle, OptimisticOracleV2, and ManagedOptimisticOracleV2.  The cutoff is independently re-derived as Polygon block 79,721,080.  The unit below is either an adapter question or an exact Oracle request generation, depending on the estimand.  No nearest-time, pooled-question, or question-ID-only fallback is used.

The accepted Oracle analysis is qualified in two ways.  First, branches that pass through a second dispute are \emph{DVM interval-only}: the data observe the second dispute and subsequent Oracle \texttt{Settle}, but not an exact DVM result time or value.  Second, Managed Optimistic Oracle (Managed OO) access history is \emph{partial observed-only}: two whitelist-update events are retained, but no access state is projected backward to requests for which the historical configuration is not observed.  These qualifications bound the claims; they do not invalidate the raw lifecycle census.

\subsection{Transport, decoding, and exact request linkage}
\label{sec:results_oracle:coverage}

The acquisition uses exact filters of the form Oracle address plus lifecycle-event topic plus indexed requester adapter.  Etherscan range leaves are recursively split until every terminal range is below the response cap.  The resulting zero-gap ledger covers 38 streams and 808 terminal ranges.  A bounded archive remote procedure call (RPC) validation compares 35 exact ranges---thirty positive and five terminal-zero---and finds no identity or payload discrepancy.

\begin{table}[htbp]
\centering
\caption{Oracle lifecycle acquisition and decoding through the verified cutoff.}
\label{tab:oracle_inventory}
\begin{tabularx}{\textwidth}{@{}Y r Y@{}}
\toprule
Layer / event & Count & Interpretation \\
\midrule
Canonical raw Oracle rows & 504,334 & Unique canonical identities; no identity conflict or removed duplicate. \\
Decoded lifecycle events & 504,332 & Full ABI decode for the four request-lifecycle event families. \\
\texttt{RequestPrice} & 184,148 & Exact request generations linked to adapter questions. \\
\texttt{ProposePrice} & 159,447 & All observed proposal events; generation-order analyses exclude sixteen order-indeterminate cases. \\
\texttt{DisputePrice} & 1,604 & Observed dispute events across the legacy Optimistic Oracle, Optimistic Oracle V2, and Managed Optimistic Oracle routes. \\
\texttt{Settle} & 159,133 & Oracle settlement/consumption events; not automatically adapter or Conditional Tokens protocol finality. \\
Managed Optimistic Oracle access/configuration events & 2 & Two observed default-proposer-whitelist updates; historical access state is not back-projected. \\
Exact Etherscan streams & 38 & Oracle address + event topic + requester topic for lifecycle streams, plus access/configuration streams. \\
Terminal range leaves & 808 & Zero-gap recursive block-range ledger. \\
Independent archive-provider overlap ranges & 35 & Thirty positive and five terminal-zero exact comparisons, all matching. \\
\bottomrule
\end{tabularx}
\end{table}

The 504,334 canonical raw rows contain 504,332 lifecycle events and two Managed OO access/configuration events.  Canonical identity is
\[
(\text{chain id},\text{block hash},\text{transaction hash},\text{log index}).
\]
No canonical identity conflict and no removed duplicate is observed.  Exact decoding yields 184,148 request generations, 159,447 proposal events, 1,604 dispute events, and 159,133 Oracle settlement events.

The exact linkage layer relates 184,148 request generations to the frozen 185,550 adapter-question population. The accounting identity closes exactly: 182,671 questions have at least one Oracle request and 1,477 successor generations produce the remaining request rows. It preserves 2,879 legacy questions as \texttt{NOT\_APPLICABLE\_NO\_REQUEST}; they are structural zero-request observations in the frozen state layer, neither failed joins nor Oracle-risk-set censoring.  Sixteen order-indeterminate generations are retained in the canonical data but excluded from generation-order estimands.  The historical legacy Finder mapping remains unresolved and is never replaced with current-state inference.

\subsection{Technical request creation is not resolution commencement}
\label{sec:results_oracle:request_clock}

A central empirical result is that the technical request timestamp is not a general measure of when a market became economically or contractually resolvable.  Modern adapters create or forward an event-based request at initialization.  The first proposal can arrive much later, after evidence and rule conditions develop.  By contrast, after a first-dispute reset, the successor request already belongs to an active adjudication path and is typically followed by a new proposal much more quickly.

\begin{table}[htbp]
\centering
\caption{Kaplan--Meier median durations in seconds. The legacy initialization-to-adapter-terminal median is computed on all 5,829 initialized legacy questions, with 2,941 terminal events and 2,888 right-censored observations.  The request clock is technical request creation, not source publication or contractual decidability.  Decentralized-verification-mechanism values are settlement upper bounds rather than exact result times.}
\label{tab:oracle_duration_medians}
\scriptsize
\resizebox{\textwidth}{!}{%
\begin{tabular}{lrrrrr}
\toprule
Adapter family &
\makecell[r]{Request to\\first proposal} &
\makecell[r]{Ordinary threshold\\to Oracle settle} &
\makecell[r]{Initialization to\\adapter terminal} &
\makecell[r]{Reset to successor\\first proposal} &
\makecell[r]{Second dispute to\\settle upper bound} \\
\midrule
Legacy adapter & 182 & 196 & 80,033,650 & -- & -- \\
Standard V2 & 264,165 & 147 & 282,655 & 2,713 & 258,868 \\
Standard V3.1 & 206,927 & 84 & 213,193 & 934 & 322,682 \\
Standard V4 & 176,388 & 598 & 180,836 & 300 & 261,266 \\
NegRisk adapter & 744,151 & 246 & 790,510 & 2,909 & 242,262 \\
\bottomrule
\end{tabular}}
\end{table}

The median request-to-first-proposal duration is 182 seconds in the legacy route, but 176,388 seconds for Standard V4, 206,927 seconds for Standard V3.1, 264,165 seconds for Standard V2, and 744,151 seconds for the original NegRisk adapter.  Those durations mix technical request age with unobserved evidence and contractual-decidability time; they must not be called ``resolution delay.''  The corresponding reset-to-successor-proposal medians are 300, 934, 2,713, and 2,909 seconds on the modern reset-capable routes.  The contrast is consistent with the two-level state representation, but it is not a causal test of that architecture: the initial request can be created long before an event becomes answerable, whereas reset occurs inside an already active adjudication path. The existence of exact successor generations and reset-linked request identities, rather than the magnitude contrast alone, is the direct empirical support for the two-level representation.

The semantic layer therefore measures source, rule, and clarification clocks rather than using request age as a semantic proxy. Until those clocks pass the registered rule and source gates, request-to-proposal duration is reported as a mechanism interval only.

\subsection{First proposal, dispute, reset, and ordinary settlement}
\label{sec:results_oracle:paths}

First-proposal paths are modeled with competing risks.  A proposal can receive a first dispute, reach its ordinary expiration threshold, or remain right-censored at the cutoff.  Missing transitions are never folded into the ordinary branch.

\begin{table}[htbp]
\centering
\caption{First-proposal competing paths by adapter family.  ``No first dispute by threshold'' denotes that the proposal liveness threshold elapsed without an observed first dispute. It is not a terminal-state classification and does not identify a separate ignore/too-early mechanism.  Legacy uses the original Optimistic Oracle; V2, V3.1, and NegRisk use Optimistic Oracle V2; V4 uses Managed Optimistic Oracle V2.}
\label{tab:first_proposal_competing}
\footnotesize
\begin{tabular}{lrrrr}
\toprule
Adapter family & Risk set & First dispute & No first dispute by threshold & Right-censored \\
\midrule
Legacy adapter & 3,006 & 22 & 2,984 & 0 \\
Standard V2 & 27,831 & 711 & 27,120 & 0 \\
Standard V3.1 & 12,916 & 102 & 12,810 & 4 \\
Standard V4 & 55,132 & 144 & 54,781 & 207 \\
NegRisk adapter & 60,546 & 625 & 59,900 & 21 \\
\bottomrule
\end{tabular}
\end{table}

The observed first-proposal risk sets contain 1,604 first disputes.  The mechanism after a first dispute differs by adapter generation.  Modern V2, V3.1, V4, and NegRisk paths usually produce an exact reset linked to a successor request; the legacy route instead proceeds through its own settlement semantics.

\begin{table}[htbp]
\centering
\caption{Observed path after the first dispute.  \texttt{SETTLE\_UPPER\_BOUND} is an observed Oracle \texttt{Settle} endpoint when no exact decentralized-verification-mechanism result timestamp is available.  The reset/successor column is conditional on an observed first dispute. It does not claim that all adapter reset events are dispute-driven; legacy and other non-dispute reset causes remain separate accounting objects. Modern dispute-triggered reset links are exact, while further escalation remains interval-qualified.}
\label{tab:first_dispute_paths}
\footnotesize
\begin{tabular}{lrrrr}
\toprule
Adapter family & First disputes & Reset / successor request & \texttt{SETTLE\_UPPER\_BOUND} & Right-censored \\
\midrule
Legacy adapter & 22 & 0 & 22 & 0 \\
Standard V2 & 711 & 588 & 123 & 0 \\
Standard V3.1 & 102 & 85 & 17 & 0 \\
Standard V4 & 144 & 124 & 18 & 2 \\
NegRisk adapter & 625 & 606 & 18 & 1 \\
\bottomrule
\end{tabular}
\end{table}

Among modern reset-capable paths, the reset is transient rather than terminal.  The successor request is separately identified, and its proposal clock begins at the reset/request generation rather than at the original question initialization.  A later dispute can enter a DVM-inclusive branch.  For those branches the observed endpoint is the subsequent Oracle \texttt{Settle}, coded as an upper bound on the unavailable exact DVM result time.

The ordinary undisputed branch exhibits a different scale.  Conditional on first-proposal expiration without dispute, median time from the ordinary finality threshold to observed Oracle \texttt{Settle} ranges from 84 seconds on Standard V3.1 to 598 seconds on Standard V4.  That short interval is an Oracle-consumption delay, not the full event-to-cash delay.  It begins only after the proposal's liveness has elapsed and ends before holder redemption.

For second-dispute branches, median time from second dispute to Oracle \texttt{Settle} upper bound ranges from 242,262 to 322,682 seconds across the observed modern routes. The term ``second dispute is request-generation specific: it denotes a first dispute observed on a successor generation (generation number at least one), not a second dispute inside one Oracle request. The frozen successor-generation risk set contains 253 such disputed generations.  The result supports a materially longer disputed path, but the exact DVM voting/result clock remains unidentified and is not midpoint-imputed.

\subsection{Oracle settlement and adapter consumption}
\label{sec:results_oracle:consumption}

Oracle \texttt{Settle} is not automatically protocol settlement finality.  A local transaction-adjacency audit finds 75,284 Oracle settle rows with a strictly later adapter terminal log in the same transaction.  This establishes an atomic consumption path for that observed subset.  Non-adjacent settles are not treated as missing or failed: some adapter paths consume an already-settled request in a later transaction, and the current Oracle-only layer does not identify a request-specific ordinary-finality-threshold-to-adapter-consumption endpoint for every generation.

Accordingly, Paper~5 reports three distinct objects:

\begin{enumerate}
  \item the ordinary proposal threshold becoming irreversible under the observed request path;
  \item the Oracle \texttt{Settle} event;
  \item the adapter/CTF terminal transition, where observed.
\end{enumerate}

The companion protocol-finality study links these states to Conditional Tokens payout
reporting and observed redemption using a separately frozen dataset.  It does not fill a missing
cross-domain duration with same-block, nearest-time, or transaction-level guesses.

\subsection{Address-level governance, proposal quality, and access limits}
\label{sec:results_oracle:governance}

Proposal and dispute actors are observed as execution addresses.  The data do not identify beneficial owners or establish that two addresses belong to distinct economic entities.  Concentration measures are therefore address-level execution concentration.

\begin{table}[htbp]
\centering
\caption{Observed Managed Optimistic Oracle proposer concentration and settled-only first-proposal agreement for Standard V4.  Concentration is address-level, not beneficial-owner concentration.  Agreement is exact ABI-decoded integer equality on the eligible settled subset.}
\label{tab:managed_governance}
\scriptsize
\resizebox{\textwidth}{!}{%
\begin{tabular}{lrrrrrrr}
\toprule
Month & Proposals & Unique proposers & Top-1 share & Top-5 share & HHI & Eligible settled & Exact agreement \\
\midrule
2025-08 & 11 & 6 & 0.272727 & 0.909091 & 0.190083 & 11 & 0.909091 \\
2025-09 & 13,707 & 87 & 0.664770 & 0.854673 & 0.454111 & 13,707 & 0.997592 \\
2025-10 & 14,036 & 67 & 0.373112 & 0.693859 & 0.172417 & 14,036 & 0.998290 \\
2025-11 & 27,378 & 89 & 0.453138 & 0.707320 & 0.228846 & 27,102 & 0.998967 \\
\bottomrule
\end{tabular}}
\end{table}

The Standard V4 Managed OO cohort illustrates a governance asymmetry.  In September 2025 the largest observed proposer address accounts for 0.664770 of proposal events and the top five for 0.854673; in November the corresponding shares are 0.453138 and 0.707320.  Exact first-proposal agreement with the eventual settled price is 0.997592, 0.998290, and 0.998967 in the September--November settled-only cohorts.  These are not unconditional accuracy estimates: disputed, unsettled, order-indeterminate, and missing-price cases follow the recorded exclusions.

Observed disputers are more diffuse in the high-volume Managed OO cohort.  In November 2025, 64 dispute events are distributed across 39 addresses, with top-one share 0.125000, top-five share 0.312500, and HHI 0.040527.  The comparison is descriptive and period-specific; it does not establish an access-policy causal effect.

Historical access configuration remains the binding governance limitation.  The separate access/configuration scan observes two \texttt{DefaultProposerWhitelistUpdated} events and no requester-access, \texttt{RoleGranted}, or \texttt{RoleRevoked} event in the scanned event surface.  The 61,595 Managed OO request rows are therefore classified as requester/proposer access unresolved rather than retrospectively allowed or denied.  No current whitelist or role getter is projected backward.

\subsection{Rebuild and claim boundary}
\label{sec:results_oracle:rebuild}

A controlled materialization and a clean network-disabled rebuild compare three population files, six analysis files, and seven governance files.  All reported analytical files match by SHA--256; Parquet schemas and row counts also match.  A content-addressed backup and extraction audit independently verifies the frozen Oracle source archive.

\begin{table}[htbp]
\centering
\caption{Post-review reconciliation of load-bearing Part I counts. The final column states the strongest claim supported by the frozen evidence.}
\label{tab:review_reconciliation}
\footnotesize
\begin{tabularx}{\textwidth}{@{}L{3.3cm}rY@{}}
\toprule
Object & Count / estimate & Disposition \\
\midrule
Initialized adapter-question instances & 185,550 & Frozen population denominator. \\
Questions with at least one Oracle request & 182,671 & Exact question-level count. \\
Request generations & 184,148 & Equals 182,671 first generations plus 1,477 successor generations. \\
Legacy initialization-to-adapter-terminal KM & 80,033,650 s & Median reached on 5,829 initialized legacy questions; 2,941 events and 2,888 censored. \\
Questions with two or more resets & 75 & Exact cohort: 71 have two resets, three have three, and one has four. The common causal explanation of all historical residual-75 identities is not established. \\
Residual-75 paths with observed dispute evidence & 72 & At least one observed dispute appears in the retained request-generation history; three residual-cohort questions have reset evidence without an observed dispute. \\
Rule-declared end-date proxy & NOT EVALUATED & The frozen corpus contains no separately labelled rule-end-date field that can be promoted to contractual decidability. \\
Adapter--Oracle relation & Identity join & Exact question/request linkage is established; causal event-to-event ordering is not inferred from the key join alone. \\
\bottomrule
\end{tabularx}
\end{table}

The exact 75-question multi-reset cohort is retained as a diagnostic accounting object rather than assigned a mechanism by coincidence. Seventy-two of the 75 questions have observed dispute evidence in at least one request generation; three have reset evidence without an observed dispute. No literal \texttt{Ignore/TooEarly} sentinel is present in the frozen event inventory, so that mechanism is not inferred. The same discipline applies to rule end dates: a \texttt{resolutionTime} parameter is not relabelled as contractual decidability.

The empirical claim boundary is now precise.  Paper~5 can report exact request generations, observed proposal/dispute/settle paths, right-censoring-aware mechanism durations, address-level concentration, settled-only proposal agreement, and the observed Managed OO configuration timeline.  It cannot report an exact DVM result time or value, historical access state before observed updates, beneficial-owner concentration, a resolved historical legacy Finder timeline, or a request-specific ordinary-finality-to-adapter-consumption duration where the endpoint is not identified.

These results establish the Oracle-finality layer of the paper.  Rules and external-source
clocks remain separately graded in Part I; Conditional Tokens protocol finality and holder
redemption are reported in Part II rather than inferred from the Oracle data.  Transition-anchored
market microstructure remains a separate empirical design.

\section{Rules, Metadata, and the On-Chain Clarification Clock}
\label{sec:results_semantic}

The adapter and Oracle layers identify how a question moves through the mechanism.  They do not, by themselves, identify which rule text traders faced, whether a venue record can be joined to a human-readable market, or when an authoritative clarification entered the on-chain record.  The accepted semantic-layer extraction adds those relations while preserving the distinction between a measured mechanism clock and an unmeasured external-evidence clock.

\subsection{Population-complete initial rules and observed updates}
\label{sec:results_semantic:rules}

Every one of the 185,550 cutoff adapter questions has an initial ancillary-data rule record.  The versioned corpus contains 185,550 initial versions and 1,570 observed on-chain update versions, for 187,120 rule versions in total.  Exact UTF-8 decoding succeeds for 187,119 versions; one decode error remains in the raw corpus.  Of the update records, 1,549 are creator-authoritative under the frozen identity rule, 12 are exact but non-authoritative, and nine remain unclassifiable because the frozen identity is missing or non-exact.  The creator-authoritative records affect 1,350 distinct adapter questions.

The source parser retains 194,124 literal URL mentions over 818 domains.  These counts are not a complete named-source taxonomy: one rule version can contain multiple URLs, many rules name sources in prose rather than by URL, and the historical webpage state behind a URL is generally not frozen.  The on-chain corpus is therefore population-complete as a record of initial ancillary data and observed updates, but only partially informative about historical off-chain source content.

\begin{table}[htbp]
\centering
\caption{Rule, metadata, and clock coverage through the verified cutoff. Rates use the denominator stated in the final column; update records are event counts, not unique questions.}
\label{tab:semantic_coverage}
\small
\begin{tabularx}{\textwidth}{YrrY}
\toprule
Layer & Count & Rate & Denominator / qualification \\
\midrule
Initialized adapter questions & 185{,}550 & 100.00\% & cutoff population \\
Gamma exact stable-identifier match, baseline & 654 & 0.35\% & cutoff population \\
Gamma exact stable-identifier match, nested union & 104{,}032 & 56.07\% & cutoff population \\
Gamma exact-identifier unmatched & 81{,}518 & 43.93\% & cutoff population \\
Gamma exact-identifier ambiguous & 0 & 0.00\% & cutoff population \\
Initial on-chain rule versions & 185{,}550 & 100.00\% & cutoff population \\
Observed on-chain update versions & 1{,}570 & --- & update records \\
Exact update timestamps & 1{,}561 & 99.43\% & update records \\
Creator-authoritative exact updates & 1{,}549 & 98.66\% & update records \\
Non-authoritative exact updates & 12 & 0.76\% & update records \\
Unclassifiable update identities & 9 & 0.57\% & update records \\
External-source clock: \texttt{U\_UNMEASURED} & 185{,}550 & 100.00\% & cutoff population \\
Contractual-decidability clock: \texttt{U\_UNMEASURED} & 185{,}550 & 100.00\% & cutoff population \\
\bottomrule
\end{tabularx}
\end{table}

\subsection{Nested Gamma records materially improve exact metadata coverage}
\label{sec:results_semantic:gamma}

The first Gamma pass used the standalone markets transport and produced only 654 exact stable-ID matches.  Re-parsing the already frozen event pages reveals 1,815,777 nested market records.  Their append-only union with 59,173 standalone market rows contains 1,874,950 transport rows and raises the exact question match to 104,032 of 185,550 questions (56.07\%), with 81,518 unmatched and no ambiguous exact matches.  The gain is entirely from frozen data; no fuzzy title, slug, date, or nearest-time rule is introduced.

Coverage is strongly route-dependent.  Exact-match rates are 98.94\% for Standard V4, 97.78\% for V3.1, 95.54\% for V2, and 48.12\% for the legacy adapter, while the original NegRisk adapter has no exact stable-ID Gamma match under the frozen identifier surface.  The unmatched residual is therefore not plausibly missing at random.  Metadata-conditioned analyses must report adapter composition and cannot silently generalize the 56\% exact cohort to the full population.

The metadata gate remains partial despite the large gain.  Terminal cursors and exact-ID accounting pass, but repeat-boundary response pages were not frozen.  The result is a closed exact relation to the available frozen transports, not proof that Gamma exposed every historical market record at every retrieval boundary.

\subsection{The clarification clock is measured on-chain}
\label{sec:results_semantic:clarification}

Exact chain identity---chain, adapter address, block number, transaction hash, and log index---recovers canonical timestamps for 1,561 of the 1,570 accepted update records.  This includes all 1,549 creator-authoritative records and 12 non-authoritative records; the nine unclassifiable identities remain explicit, and no conflicting identity is observed.  The timestamp is an on-chain clarification/update clock.  It is not relabelled as external publication time or as the instant at which the contract became decidable.

Joining creator-authoritative updates to the exact request-generation paths produces 1,711 clarification--request-generation rows.  The multiplicity is expected: one question can have more than one request generation after a reset.  \Cref{tab:clarification_ordering} shows that 1,426 rows (83.34\%) fall after the technical request but before the first proposal.  A further 99 rows occur after an exact reset and before the successor proposal.  These observations provide direct population-level evidence that the rule-governance surface can change while a request already exists and that request creation is not a sufficient semantic start time for adjudication.

\begin{table}[htbp]
\centering
\caption{Position of creator-authoritative on-chain clarifications relative to exact request-generation paths.  The unit is a clarification--request-generation relation; one clarification can map to more than one generation.}
\label{tab:clarification_ordering}
\begin{tabular}{lrr}
\toprule
Ordering class & Rows & Share \\
\midrule
Before first request & 58 & 3.39\% \\
Request to first proposal & 1{,}426 & 83.34\% \\
After first proposal, before first dispute/threshold & 17 & 0.99\% \\
After reset, before successor proposal & 99 & 5.79\% \\
After successor proposal & 60 & 3.51\% \\
After Oracle \texttt{Settle} & 51 & 2.98\% \\
\midrule
Total exact request-generation rows & 1{,}711 & 100.00\% \\
\bottomrule
\end{tabular}
\end{table}

The remaining ordering classes require care.  Sixty rows place a clarification after a successor proposal and 51 after an Oracle \texttt{Settle} for the linked request generation.  These rows establish order, not legal or economic effect: the data do not show that an already settled request was retroactively changed, and multiple generations can coexist within one question history.  They are retained as governance-path observations rather than folded into a single ``pre-resolution clarification'' category.

Signed duration tables are available only where both named endpoints are exact.  Initialization-to-clarification duration remains unavailable because no exact initialization endpoint is available; request time is not substituted for it.  Negative signed values are retained where the declared endpoint order makes them meaningful; no midpoint imputation is used.

\subsection{A population-level negative result for external clocks}
\label{sec:results_semantic:external_clocks}

The frozen evidence does not identify defensible population-level publication times for named external sources.  All 185,550 external-source clocks and all 185,550 contractual-decidability clocks remain \texttt{U\_UNMEASURED}.  Current Gamma retrieval fields, technical request creation, proposal, dispute, Oracle settlement, adapter terminal, and clarification block time are not used as proxies.

This negative result changes the interpretation of the Oracle timing results.  Request-to-proposal duration is an interval between two mechanism events; it is not ``evidence delay'' or ``resolution delay.''  Proposal timing relative to external publication, and whether a proposal was early or late under contemporaneous facts, remain unidentifiable at population scale.  The paper can measure rule versions and on-chain clarifications exactly while still being unable to observe the first moment at which the external world supplied a terminal answer.

\subsection{Implications for a leveraged overlay}
\label{sec:results_semantic:leverage}

The semantic layer reinforces the state-contingent design argument in three ways.  First, the technical-request state is too coarse for margin or funding triggers: most measured clarifications occur after request creation but before proposal.  Second, a creator-authoritative clarification is a distinct, verifiable governance transition that can change the interpretation risk of an open position even when no price or Oracle event occurs at the same moment.  Third, the absence of a population-wide external clock prevents a venue from calibrating leverage compression to a researcher's reconstructed ``truth time''; any live design must use observable protocol states, explicitly sourced venue attestations, or conservative policy windows rather than a hidden semantic timestamp.

These results do not show that clarifications cause disputes, price moves, or delay.  Clarified questions are selected, and the present analysis does not join market-microstructure windows.  The causal and price-discovery questions therefore remain outside this paper; the present contribution is a measured rule-governance clock and an explicit boundary on what it is not.

\section{Discussion and Implications for Leveraged Design}
\label{sec:design_implications}

\subsection{The empirical object is a state surface, not a countdown}

The results do not show that every event market needs more margin whenever a request is old.
They show that request age combines several economically different situations.  A question can
be technically requested but not yet semantically decidable; it can have a live undisputed
proposal; it can be inside a first-dispute reset; or it can be awaiting later adjudication.  A
single countdown from initialization cannot distinguish those states.

A later risk engine needs at least the adapter-question identity, active request generation,
proposal value and liveness state, dispute/reset history, contemporaneous rule version,
observation grade, oracle settlement, adapter terminality, and the downstream Part~II state.

\begin{designprinciple}[State variables must follow the mechanism]
\label{dp:state-variables-must-follow-the-mechanism}
A risk control should not use one scalar ``resolution time'' when feasible next actions and
remaining exposure depend on request generation, dispute history, rule state, and finality layer.
\end{designprinciple}

\subsection{State-triggered collateral and leverage}

The reconstruction supports an architecture in which collateral and leverage change at verified
state transitions rather than only through realized volatility or scheduled time.  Candidate
triggers include first terminal-valued proposal, first dispute, reset, successor proposal, ordinary
liveness expiry, oracle finality, and adapter terminality.

Part I does not select the optimal trigger.  Paper~6 tests whether state-triggered controls can
reduce the pre-emption trade-off identified in Paper~1 without worsening bad debt.

\subsection{Actor control and trigger manipulability}

A transition is not safe as a risk trigger merely because it is observable on-chain.  If an actor
with a large event-linked position can cause or accelerate a proposal, dispute, or reset, the
control can transfer value.  A trigger must therefore be evaluated against actor access,
external state-contingent exposure, bond/fee/penalty, and value transferred through margin,
liquidation, auction, or conversion.

Paper~3 supplies the manipulation taxonomy and Paper~6 formalizes the accounting test.
Part I establishes the historical actor/access surface but does not infer intent or beneficial
ownership.

\subsection{Funding during long-lived request contexts}

Modern request-to-proposal medians are measured in days, but the request clock can start before
semantic readiness.  Funding that scales mechanically with request age can therefore transfer
value before the event is contractually decidable.

A defensible design distinguishes basis funding supported by an executable convergence loop,
finality carry for capital locked in a verified post-result state, custody/service charges, and
policy transfers unrelated to basis convergence.  Hidden research-only clocks are unsuitable
live inputs.

\subsection{Rule updates and governance}

A technical request can predate a creator-authoritative rule update that later shapes
adjudication.  A system that freezes rule state at initialization would misrepresent the observed
governance surface.  A system that uses the latest text without versioning would back-project
later clarifications into earlier proposals.

The minimum auditable design stores the initial rule, each authoritative update, exact ordering
relative to proposals/disputes, source/content hash, and an explicit policy for prospective or
interpretive effect.  The empirical ordering alone does not decide retroactivity.

\subsection{Oracle finality is not protocol or cash finality}

Part I stops at the adapter boundary.  A successful oracle value can be irreversible yet not
have been consumed; an adapter can be terminal while payout recording or holder redemption
remains a separate observation.  Part II reports those downstream states from its separately frozen protocol dataset.

This boundary matters for margin release, conversion, and cash-equivalent valuation.  A design
cannot release all collateral merely because an oracle request is settled if delivery, protocol
payout, or custody remains unresolved.  Holder inaction after redeemability should not be
mislabeled as oracle delay.

\subsection{Implications for Papers~6 and~9}

Paper~6 receives state/trigger definitions, exact request paths, durations, actor-control
variables, rule-version marks, and finality endpoints.  Paper~9 receives the same objects at leg
level and studies asynchronous finality.  Detailed conversion accounting, auctions, collateral
envelopes, and shared-capacity constraints are intentionally not duplicated here.

\section{Limitations}
\label{sec:limitations}

\paragraph{Population unit.}
The population is complete for the frozen adapter-address/question-ID scope, not necessarily
for unique economic markets across migrations, wrappers, or duplicated listings.

\paragraph{Cutoff censoring.}
Questions and generations without a terminal event at the cutoff are right-censored.  They do
not establish permanent non-resolution.

\paragraph{External semantic clocks.}
External-source publication and contractual decidability remain unmeasured population-wide.
The paper cannot estimate truth-to-proposal or truth-to-finality latency for the full cohort.

\paragraph{Later-dispute timing.}
Exact DVM result time and value are not observed.  Dispute-to-later settlement is an inclusive
interval or upper bound, not a point estimate of voting duration.

\paragraph{Access and identity.}
Historical Managed Optimistic Oracle access state is partial observed-only.  Proposal/dispute
concentration is address-level; beneficial ownership, relayers, and common control are not
identified.

\paragraph{Clarification interpretation.}
Clarification timing establishes order and authorship, not causal, legal, or retroactive effect.
The on-chain rule corpus does not reconstruct every historical webpage or named-source update.

\paragraph{Metadata selection.}
Exact Gamma coverage is incomplete and route-dependent.  Metadata-conditioned results apply
to the exact matched cohort.

\paragraph{Proposal quality.}
Exact final-value agreement is not an unconditional accuracy score and does not establish
timeliness under contemporaneous rules.

\paragraph{Administrative rationale.}
On-chain evidence can show pause, flag, reset, manual handling, proposal, dispute, and
settlement.  It does not reveal motive or private communication.

\paragraph{Price and liquidity transitions.}
Part I does not promote unexecuted price/book analyses as results.  Public fills do not recover
the address-level quote lifecycle established as unavailable in Paper~4.

\paragraph{Downstream finality.}
Part I stops at adapter terminality by design.  CTF payout recording and observed redemption
are estimated in the companion Part II and are not pooled into Part I.  Neither analysis claims total entitlement redeemed, unredeemed value, or complete holder cash realization
without the corresponding entitlement denominator.

\paragraph{Venue scope.}
Results are Polymarket-specific.  Kalshi comparison requires native state reconstruction and
aligned endpoints rather than label matching.

\paragraph{Design scope.}
State-triggered margin, provisional settlement, auction, and claim conversion are mechanism
families evaluated in Paper~6, not deployable specifications proved here.

\paragraph{Data disclosure boundary.}
The article reports aggregate event, timing, and governance results; raw address-level and restricted reconstruction materials are not distributed.

\paragraph{Reset-cause identification.}
The frozen state and event layers identify an exact cohort of 75 questions with two or more resets, but they do not establish one common causal mechanism for every arithmetic residual of the same cardinality. In particular, no distinct \texttt{Ignore/TooEarly} event literal is observed, so the paper does not attribute the cohort to that branch.

\paragraph{Rule-declared end dates.}
A bounded audit found no separately labelled rule-end-date field that can be interpreted as contractual decidability across the frozen population. The observed \texttt{resolutionTime} parameter is not promoted to that semantic clock. The end-date proxy analysis is therefore reported as not evaluated rather than filled by a technical timestamp.

\section{Conclusion}
\label{sec:conclusion}

Polymarket adjudication is a versioned, multistate process rather than a scalar timestamp.
Through Polygon block 79,721,080, the accepted adapter layer contains 350,703 decoded cutoff
logs and 185,550 initialized adapter questions.  Exact Oracle extraction adds 504,332 lifecycle events comprising 184,148 requests, 159,447 proposals, 1,604 disputes, and 159,133 settlements.
Request generations are linked without nearest-time or question-ID-only fallback.

The central timing result separates technical context from active adjudication.  Modern
request-to-first-proposal medians are measured in days, while reset-to-successor-proposal
medians are measured in minutes or tens of minutes.  The first interval can begin before
sufficient evidence, contractual decidability, or even a creator-authoritative clarification.
The second begins after a contested proposal inside an active path.

The rule layer reaches the same conclusion from a different direction.  Every cutoff question
has an initial on-chain rule version, the versioned ledger contains 1,570 updates, and 1,549
creator-authoritative updates receive exact timestamps.  Among 1,711 exact
clarification--generation relations, 1,426 occur after request creation but before first
proposal.

At the same time, external-source publication and contractual-decidability time remain
unmeasured population-wide.  Current metadata, proposal time, oracle settlement, adapter
terminality, and clarification block time are not substitutes.  The paper reconstructs
governance while refusing to fabricate a truth-to-resolution clock.

For leveraged design, request generation, dispute path, rule version, state occupation, actor
access, and observation grade are balance-sheet inputs.  A control that sees only scheduled
resolution time or current volatility misses whether a question is merely requested, clarified,
proposed, disputed, reset, oracle-final, or adapter-terminal.  Paper~6 evaluates the resulting
mechanism choices; Paper~9 extends them to asynchronous multi-leg finality.

Post-review reconciliation strengthens the accounting boundary. The legacy initialization-to-adapter-terminal Kaplan--Meier median is 80,033,650 seconds on the full 5,829-question legacy cohort. The request-generation identity closes exactly at 182,671 request-bearing questions plus 1,477 successor generations. An exact 75-question multi-reset cohort is retained, but the frozen event inventory does not justify assigning one common \texttt{Ignore/TooEarly} cause; the paper reports that non-identification rather than converting a cardinality match into a causal claim.

The companion Part II follows the downstream protocol path through Conditional Tokens payout
recording and observed redemption using a separately frozen fixed-snapshot dataset.  The split
preserves the central result: adjudication, protocol settlement, and holder realization are
different states, and each is measurable only to the extent that its own evidence and denominator
support it.

\appendix
\section{Operational Entity, State, and Clock Dictionary}
\label{app:dictionary}

The data model separates economic questions, oracle request generations, protocol conditions, and holder redemptions.  The distinction prevents a venue-facing ``market'' identifier or a final timestamp from silently replacing technically different objects.

\subsection{Entity hierarchy}

\begin{table}[htbp]
\centering
\caption{Canonical entity hierarchy used across Parts I and II.}
\label{tab:sup_entity_ontology}
\small
\begin{tabularx}{\textwidth}{@{}L{2.7cm}L{4.2cm}Y@{}}
\toprule
Entity & Canonical key & Analytical role \\
\midrule
Economic question & Adapter address plus question identifier & Rule versions, clarifications, request chain, terminal adapter path \\
Oracle request generation & Request-family-specific tuple plus deterministic hash & Proposal, liveness, dispute, reset, later adjudication, oracle settlement \\
Proposal or dispute & Transaction hash plus log index & Actor, value, bond, timing, and challenge path \\
Adapter terminal event & Adapter address, question ID, transaction hash, log index & Part I endpoint and bridge to downstream protocol state \\
Conditional Tokens condition & Condition identifier derived from oracle, question ID, and slot count & Payout numerators, denominator, and protocol redeemability in the companion Part II \\
Holder redemption & Condition, redeemer, transaction hash, log index & Observed collateral realization in the companion Part II \\
\bottomrule
\end{tabularx}
\end{table}

The successful request is not assumed to be the latest request by timestamp.  It is the request whose value is actually consumed by the adapter or whose verified exceptional branch produces the terminal adapter outcome.

\subsection{Question-level states}

\begin{table}[htbp]
\centering
\caption{Question-level states and attached clocks.}
\label{tab:sup_question_states}
\small
\begin{tabularx}{\textwidth}{@{}L{3.0cm}L{2.5cm}Y@{}}
\toprule
State or clock & Observation type & Operational definition \\
\midrule
Initialized & Exact on-chain & Adapter initialization for an exact adapter-question key \\
Evidence sufficient & Point, interval, or unmeasured & Named source contains sufficient rule-relevant evidence \\
Contractually decidable & Point, interval, or unmeasured & Contemporaneous rule permits a terminal answer \\
Clarification mark & Exact on-chain & Creator-authoritative or other versioned rule update; not terminal \\
Paused or flagged & Exact on-chain where observed & Administrative safety state; dormant source-code capability is not coded as observed \\
Oracle final & Exact or interval-qualified & Successful request is irreversible under the ordinary challenge path and consumable \\
Adapter terminal & Exact on-chain & Terminal transition emitted by the requesting adapter \\
Protocol final & Exact on-chain, the companion Part II & Conditional Tokens payout vector recorded \\
Holder redeemed & Exact on-chain per holder, the companion Part II & Successful redemption linked to condition and redeemer \\
\bottomrule
\end{tabularx}
\end{table}

\subsection{Request-generation states}

\begin{table}[htbp]
\centering
\caption{Request-generation states.}
\label{tab:sup_request_states}
\small
\begin{tabularx}{\textwidth}{@{}L{3.0cm}L{2.5cm}Y@{}}
\toprule
State & Observation type & Operational definition \\
\midrule
Request created & Exact on-chain & Requester, identifier, request fields, ancillary data, and linked adapter question \\
Awaiting proposal & Derived exact interval & Request creation until proposal, reset, non-standard termination, or censoring \\
Proposed-any & Exact on-chain & Any proposed value, including ignore or too-early \\
Proposed-terminal & Exact on-chain & Proposed value belongs to the supported terminal payout set \\
Liveness & Derived interval & Proposal until dispute, ordinary expiry, replacement, or censoring \\
Disputed & Exact on-chain & Verified dispute event and actor \\
Reset & Exact on-chain & Adapter reset and successor-request relation; transient, not absorbing \\
Later adjudication & Exact/interval & Later-dispute branch whose result time may be interval-observed \\
Oracle settled & Exact/interval & Request value set or first verifiably consumable \\
Superseded & Derived & Request no longer governs after exact successor creation \\
Right-censored & Cutoff exact & No observed next state by the locked boundary \\
\bottomrule
\end{tabularx}
\end{table}

\subsection{Ordering and rule versions}

On-chain records are canonically ordered by block number and log index, with transaction hash as a stable tie-break and transaction index retained as auxiliary metadata where available.  Events in one block can be ordered but do not have separately observed wall-clock duration.  A reverted transaction produces no state transition.

For question $q$, the contemporaneous rule process is a right-continuous marked sequence
\begin{equation}
  \rho_q(t)=\rho_{q0}+\sum_{j=1}^{J_q}\Delta\rho_{qj}\ind{t\ge t^{\mathrm{clarify}}_{qj}}.
\end{equation}
Each mark stores exact author classification, chain identity, raw content or content hash, and observation status.  A proposal is paired with the last verified rule version preceding the proposal in canonical chain order.  Later text is never backfilled.

\subsection{Observation grades}
\begin{description}[style=nextline,leftmargin=2.8cm]
\item[$\ObsExact$ --- Exact] Exact on-chain event, exact primary-source timestamp, or deterministic state transition.
\item[$\ObsInterval$ --- Interval] A bounded interval from exact primary evidence.
\item[$\ObsSnapshot$ --- Snapshot] Current or terminal state observed without transition history.
\item[$\ObsProxy$ --- Proxy] Documented but functionally imperfect substitute.
\item[$\ObsUnmeasured$ --- Unmeasured] Unsupported clock; excluded from primary latency estimation.
\item[$\ObsConflict$ --- Conflicting] Load-bearing evidence disagrees.
\end{description}
Grades are clock-specific rather than question-wide. A question can have an $\ObsExact$ adapter-terminal time and an $\ObsUnmeasured$ external-source time.

\section{Estimands, Denominators, and Gate Consequences}
\label{app:estimands}

\subsection{Population and path estimands}

Primary population estimands include:

\begin{itemize}
  \item adapter questions by route and cutoff state;
  \item request generations per adapter question;
  \item exact proposal, dispute, reset, settlement, and adapter-terminal path counts;
  \item right-censoring counts and risk-set age;
  \item rule versions and clarification events;
  \item exact metadata match and attrition counts.
\end{itemize}

Every percentage names its denominator.  Event counts are not silently converted into unique-question counts.

\subsection{Duration estimands}

Mechanism intervals include:

\begin{align*}
  D^{RP}_{qk} &= t^{(k)}_{\mathrm{proposal}}-t^{(k)}_{\mathrm{request}},\\
  D^{RS}_{qk} &= t^{(k+1)}_{\mathrm{proposal}}-t^{(k)}_{\mathrm{reset}},\\
  D^{OS}_{qk} &= t^{(k)}_{\mathrm{settle}}-t^{(k)}_{\mathrm{ordinary\ threshold}},\\
  D^{AT}_{q} &= \tadapter-t^{\mathrm{init}}_q.
\end{align*}

Kaplan--Meier and competing-risk summaries follow the standard distinction between right censoring and alternative destinations \citep{kaplan_meier_1958,aalen_johansen_1978}.  The first quantity is not evidence delay unless source and decidability clocks are independently observed.  Later-dispute durations are represented by exact intervals or upper bounds when the data-verification-mechanism result timestamp is absent.

\subsection{Governance estimands}

Proposal and dispute concentration is computed over execution addresses, using top shares and the Herfindahl--Hirschman index.  Results are not described as beneficial-owner concentration.  Proposal agreement is measured only on the exact eligible settled subset and reports its denominator and attrition.

Clarification ordering uses the relation between one creator-authoritative update and one exact request generation.  One update can therefore contribute several rows when the question has several generations.

\subsection{Gate consequence matrix}

\begin{table}[htbp]
\centering
\caption{Mandatory claim consequence of a failed gate.}
\label{tab:sup_gate_consequences}
\small
\begin{tabularx}{\textwidth}{@{}L{2.0cm}L{4.0cm}Y@{}}
\toprule
Gate & Failure & Consequence \\
\midrule
Population & Adapter denominator not closed & Stop population rates; report registry and unresolved scope only \\
Decode & ABI/topic coverage incomplete & Exclude affected event family; retain raw envelope and anomaly \\
Linkage & Request cannot be exactly linked & Stop generation-path inference for affected rows; no temporal fallback \\
Clock & Required clock ungraded or unjustifiably point-valued & Use interval language or remove duration claim \\
Rule & Contemporaneous rule state unavailable & Remove rule-timing or early/late interpretation \\
Source & Evidence publication unavailable & Retain mechanism intervals; remove truth/source latency claim \\
Access & Historical access state unavailable & Report address concentration without access-regime comparison \\
Downstream finality & Protocol payout/redemption is outside the Part I estimand set & Report the claim in the companion Part II; do not infer it from Part I \\
Rebuild & Reported analytical outputs do not reproduce & No result based on those outputs is reported \\
\bottomrule
\end{tabularx}
\end{table}

\subsection{Claim-strength labels}

Every result is one of:

\begin{description}[style=nextline,leftmargin=3.0cm]
  \item[Population fact] Deterministic count or distribution on the frozen population.
  \item[Descriptive association] Estimated relationship without causal interpretation.
  \item[Interval observation] Event order or duration bounded but not point-observed.
  \item[Formal implication] Algebraic or state-representation result under explicit assumptions.
  \item[Design implication] Constraint on a later mechanism, not a production-safety claim.
\end{description}

\section{Source, Contract, and Claim Audit Protocol}
\label{app:sourceaudit}

Mechanism documentation, deployed addresses, verified source, dependency configuration, and historical invocation can change independently.  Claims therefore follow an explicit source hierarchy.

\subsection{Source hierarchy}

For historical mechanism claims, evidence is ranked as follows:

\begin{enumerate}
  \item on-chain bytecode, logs, storage, traces, and verified state at the relevant block;
  \item verified source matched to deployed bytecode and pinned to an immutable hash;
  \item signed governance proposal, deployment record, or official versioned specification;
  \item frozen official documentation with retrieval time and content hash;
  \item mutable current documentation;
  \item secondary discussion or reporting.
\end{enumerate}

Lower-ranked evidence can motivate a search but cannot override higher-ranked historical evidence.  Conflicts are reported rather than resolved rhetorically.

\subsection{Claim predicates}

Every mechanism statement receives one predicate:

\begin{description}[style=nextline,leftmargin=2.5cm]
  \item[DOCUMENTED] Official documentation currently states the behavior.
  \item[CAPABLE] Bytecode-matched source permits the behavior.
  \item[CONFIGURED] Historical state shows the relevant parameter or access control.
  \item[OBSERVED] A transaction, event, trace, or state transition shows that the behavior occurred.
  \item[INFERRED] A transparent inference from established facts; assumptions are stated.
\end{description}

``The adapter can manually resolve'' is CAPABLE.  ``Question $q$ was manually resolved'' requires OBSERVED evidence.  Current access configuration is never projected backward without historical state.

\subsection{Evidence freeze}

A reproducibility-grade evidence freeze includes:

\begin{itemize}
  \item deployed-address and proxy/implementation registry;
  \item runtime bytecode and source hashes;
  \item adapter dependency map;
  \item ABI/event-topic registry;
  \item source-claim matrix;
  \item initialized-question-weighted coverage audit;
  \item unresolved-items register with claim consequences;
  \item commands, software environment and revision, input/output hashes, row counts, and documented interventions.
\end{itemize}

A repository's current default branch is not presumed to represent an older deployed adapter.

\subsection{Historical-state discipline}

The following are not back-projected from current values:

\begin{itemize}
  \item proposer whitelist membership and enforcement;
  \item administrative roles;
  \item bond and liveness parameters;
  \item adapter-to-oracle dependencies;
  \item rule and clarification state;
  \item collateral and redemption routes.
\end{itemize}

When exact historical state is unavailable, the record remains interval-observed or unresolved and the corresponding claim is narrowed.

\section{Proofs and Formal Audit}
\label{app:proofs}

\begin{proof}[Proof of \Cref{prop:request_history}]
Let history \(H_0\) end with a proposal on the first active request and let \(H_1\) end with a
proposal on a successor request created after the question's first dispute and reset.  A coarse
representation omitting request generation gives the same current value:
\[
  \widetilde S_q(t;H_0)=\widetilde S_q(t;H_1)=\mathrm{Proposed}.
\]
Apply an otherwise identical valid dispute to the active request in each history.  Under the
branch assumed in the proposition, the dispute following \(H_0\) permits a successor-request
transition, whereas the dispute following \(H_1\) exposes the later-adjudication branch.
The feasible next-state set differs despite the same coarse state.  Hence the transition law
depends on omitted history, and active request generation or an equivalent sufficient variable is
required.
\end{proof}

\begin{proof}[Proof of \Cref{prop:scalar_timestamp}]
Let paths \(P\) and \(P'\) share \((Y_q,\tadapter)\).  Any function of that scalar pair assigns
them the same value.  From \eqref{eq:path_functional},
\[
  Z_q(P)-Z_q(P')
  =
  \sum_{j\in\mathcal J}h_j
  \left[\occupancy_{qj}(P)-\occupancy_{qj}(P')\right].
\]
By assumption, the occupation-vector difference is not orthogonal to \(\mathbf h\), so the
right-hand side is non-zero.  Therefore the path-dependent functional cannot be recovered from
the common scalar terminal pair.
\end{proof}

\begin{lemma}[Transport invariance of derived state]
\label{lem:transport_invariance}
Fix a contract/topic registry, inclusive block interval, canonical event identity, decoder, and
deterministic state builder.  If two providers return the same canonical event identities and
semantic payloads over the interval, the derived state tables are identical up to non-semantic
file metadata.
\end{lemma}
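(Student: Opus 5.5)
The plan is to treat the pipeline as a composition of deterministic maps and show that its input is fixed by the hypotheses. The derived state table is
\[
T=\mathrm{Build}\circ\mathrm{Decode}\circ\mathrm{Canon}\bigl(\mathcal L\bigr),
\]
where \(\mathcal L\) is the multiset of raw logs a provider returns for the fixed registry and inclusive block interval.

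First, I would formalize canonicalization. \(\mathrm{Canon}\) maps each raw log to a record keyed by the canonical identity \eqref{eq:canonical_event_identity}. It collapses exact duplicates and retains only the non-semantic envelope fields (retrieval time, provider name, file layout) in a separate metadata component. Under the hypothesis, the two providers' returns, projected onto (identity, semantic payload), are equal as sets. Taking a set is what absorbs differences in order and duplication. The canonicalized semantic input is therefore the same set \(E\) for both providers.

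Second, I would show that every later stage is a function only of \(E\) and the fixed configuration. The decoder is fixed and acts on topic and data fields, which are part of the semantic payload. Decoding is applied pointwise, so the decoded set \(D(E)\) coincides. The state builder is deterministic and consumes events in canonical chain order (block number, log index, with transaction hash as tie-break, as in \Cref{app:dictionary}). That order is a function of the identities alone, not of provider delivery order. So the builder receives the same ordered sequence and emits the same tables.

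The main obstacle is ruling out hidden dependence on provider-specific or non-semantic inputs. There are three candidates: delivery order, provider-supplied fields such as block timestamps or a \texttt{removed} flag, and non-deterministic serialization. I would handle these by making the hypotheses explicit.
\begin{itemize}
\item \emph{Delivery order.} This is neutralized by the canonical sort.
\item \emph{Provider-supplied fields.} Any field the builder reads, such as the block timestamp, must be counted as part of the semantic payload; otherwise that field is excluded from the builder's domain. Since the block hash is in the canonical identity, the timestamp is fixed by canonical chain content anyway.
\item \emph{Serialization.} Any residual difference, such as file metadata or row-group layout, is exactly the ``non-semantic file metadata'' quotient in the statement.
\end{itemize}
With these stipulations the lemma follows by functional composition. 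Equality holds in the quotient that identifies tables differing only in non-semantic metadata.
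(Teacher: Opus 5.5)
Your proposal is correct and follows the paper's own argument. Equal canonical identities and semantic payloads give the same canonically ordered input, so the fixed decoder and deterministic builder produce identical derived state (the paper phrases this as induction over the canonical order, you as functional composition), and page boundaries, formatting, retrieval time, and file metadata never enter the state function; your explicit requirement that any provider-supplied field the builder reads, such as block number or block timestamp, count as semantic payload is a useful sharpening that the paper's short proof leaves implicit.
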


\begin{proof}
The state builder consumes the canonically ordered event set and fixed registries.  Equal
identities and semantic payloads imply equal ordered inputs.  Every state transition and derived
relation agrees by induction over that order.  Provider page boundaries, JSON formatting,
retrieval time, and file metadata do not enter the state function.
\end{proof}

\subsection{Formal-claim boundary}

The propositions are representation and identification results.  They do not prove that a
particular risk engine is optimal, that all historical questions used every source-code branch,
or that the missing external clocks can be recovered from on-chain timing.  Numerical examples
and figures are explanatory rather than proof substitutes.

\section{Reproducibility and Evidence Lineage}
\label{app:reproducibility}

The companion materials are evidence-graded and versioned.  Transport is treated as a means of
reconstructing immutable event relations rather than as the relations themselves.  Population,
transport, analysis, and disclosure roles remain distinct.

\subsection{Analytical table families}

The reported evidence includes adapter questions and events, Oracle request generations and
lifecycle events, question--request relations and path summaries, versioned rules and
clarifications, exact metadata matches and attrition, and the Part~I estimand and gate tables.
Protocol payout and redemption relations use the separately frozen downstream dataset considered
in Part~II.

\subsection{Identity and temporal fields}

Chain-event identity retains chain, contract, transaction hash, and log index.  Repeated source
observations do not create additional canonical events; they remain provenance edges.  Conflicting
payloads under one canonical identity are retained as conflicts.  Question and request identifiers
remain separate because a reset can create a successor request generation without creating a new
economic question.

Each timestamp records its clock type, point or interval representation, observation grade,
source identity, and block/transaction/log order where applicable.  Unmeasured external clocks
remain explicit unavailable coordinates rather than being filled with request or settlement time.

\subsection{Coverage and provenance}

Every target-event range records chain ID, contract, topic, inclusive block range, split lineage,
provider, response status, row count, and canonical-event digests.  A range is terminal only when
its response is successful and below the applicable cap, or when exact single-block pagination is
complete.  A broad first-page response does not prove coverage.

The durable event layer preserves complete topics and data, exact identity, decoder revision,
semantic hashes, range provenance, and deterministic audit samples.  Missing or malformed
load-bearing identity prevents use of the affected event in the corresponding claim.  Auxiliary
metadata remains explicitly missing when unavailable, and semantic conflicts are not resolved by
latest-file or latest-provider heuristics.

\subsection{Public reproducibility boundary}

The public companion supplies aggregate and privacy-reviewed material needed to assess the
reported claims.  It does not provide ranked pseudonymous addresses when aggregates suffice.
The accompanying documentation describes the relevant schema, methodology, limitations,
processing lineage, descriptive statistics, and citation metadata.  Later observation windows are
treated as new versions rather than silent extensions of the fixed snapshot.

\section{Version-Locked Hypotheses and Disposition}
\label{app:hypotheses}

The integrated analysis fixed nine directional hypotheses before the final accepted result
layers were inspected.  The split into Parts I--II and Papers~6 and~9 does not permit
those hypotheses to be rewritten after inspection.  \Cref{tab:hypothesis_disposition} records
their status.

\begin{table}[htbp]
\centering
\caption{Disposition of the version-locked hypotheses after the paper split.}
\label{tab:hypothesis_disposition}
\small
\begin{tabularx}{\textwidth}{@{}L{3.0cm}L{3.2cm}Y@{}}
\toprule
Hypothesis family & Status in Part I & Consequence \\
\midrule
Upper-tail finality duration &
Partially measured; original decision-ready endpoint unavailable &
Technical request/reset/settlement intervals are reported; the source-conditioned hypothesis is
not evaluated as stated. \\
Partial convergence at first terminal proposal &
Price and liquidity layer not evaluated in Part I &
No convergence effect is promoted in Part I. \\
Liquidity during the post-evidence finality gap &
Unmeasured external clock and price/liquidity layer &
Outside the Part I analytical scope; no substitute clock is used. \\
Clarification-path association &
Timing component measured; path-association claim pending &
Exact clarification ordering is reported without causal or incidence language beyond the
accepted tables. \\
Request-level proposal-access trade-off &
Access history partial observed-only &
Address concentration is reported; access-regime effects remain unestimated. \\
Finality wedge and duration &
Price and liquidity layer not evaluated in Part I &
Formal estimand retained; no numerical claim. \\
Bond-scaling mismatch &
Outside the Part I analytical scope &
Not a Part I result. \\
State-triggered margin and conversion &
Outside the Part I analytical scope &
Not a Part I result. \\
Administrative-route rarity and materiality &
Rarity observed; value materiality blocked &
Observed administrative counts are reported; economic materiality is not claimed. \\
\bottomrule
\end{tabularx}
\end{table}

The technical-initialization placebo remains a validity control rather than a substantive tenth
hypothesis.  A convergence response around initialization comparable to proposal/finality
transitions would require an identifier and leakage audit before supporting proposal-effect
language.

\section*{Data and Code Availability}
The article reports aggregate event, timing, and governance results; raw address-level and restricted reconstruction materials are not distributed.

\section*{Generative AI Disclosure}
OpenAI ChatGPT and Codex were used for editorial and technical assistance during manuscript preparation. The author made all substantive research decisions, reviewed the final manuscript, and assumes full responsibility for its contents.

\section*{Funding}
This research received no external funding.

\section*{Competing Interests}
The author is affiliated with the Research Department of Devnull FZCO and leads the ForesightFlow research programme. No external sponsor influenced the research design, analysis, interpretation, or decision to publish. The article does not evaluate a commercial product or make investment recommendations.

\section*{Acknowledgments}
This work was completed within the ForesightFlow research programme. All remaining errors are the author's responsibility.

\printbibliography[heading=bibintoc,title={References}]

@article{aalen_johansen_1978,
  author  = {Aalen, Odd O. and Johansen, Soren},
  title   = {An Empirical Transition Matrix for Non-Homogeneous Markov Chains Based on Censored Observations},
  journal = {Scandinavian Journal of Statistics},
  volume  = {5},
  number  = {3},
  pages   = {141--150},
  year    = {1978}
}

@unpublished{adler_astraea_2018,
  author        = {Adler, John and Berryhill, Ryan and Veneris, Andreas and Poulos, Zissis and Veira, Neil and Kastania, Anastasia},
  title         = {Astraea: A Decentralized Blockchain Oracle},
  year          = {2018},
  eprint        = {1808.00528},
  archivePrefix = {arXiv},
  doi           = {10.48550/arXiv.1808.00528}
}

@book{andersen_borgan_gill_keiding_1993,
  author    = {Andersen, Per Kragh and Borgan, Ornulf and Gill, Richard D. and Keiding, Niels},
  title     = {Statistical Models Based on Counting Processes},
  publisher = {Springer},
  address   = {New York},
  year      = {1993}
}

@report{cpss_iosco_2012,
  author      = {{Committee on Payment and Settlement Systems and Technical Committee of IOSCO}},
  title       = {Principles for Financial Market Infrastructures},
  institution = {Bank for International Settlements and International Organization of Securities Commissions},
  type        = {International standard},
  year        = {2012},
  url         = {https://www.bis.org/cpmi/publ/d101a.pdf},
  urldate     = {2026-08-13}
}

@inproceedings{eskandari_oracles_2021,
  author    = {Eskandari, Shayan and Salehi, Mehdi and Gu, Wanyun Catherine and Clark, Jeremy},
  title     = {{SoK}: Oracles from the Ground Truth to Market Manipulation},
  booktitle = {Proceedings of the 3rd ACM Conference on Advances in Financial Technologies},
  year      = {2021},
  doi       = {10.1145/3479722.3480994},
  eprint    = {2106.00667},
  archivePrefix = {arXiv}
}

@article{fine_gray_1999,
  author  = {Fine, Jason P. and Gray, Robert J.},
  title   = {A Proportional Hazards Model for the Subdistribution of a Competing Risk},
  journal = {Journal of the American Statistical Association},
  volume  = {94},
  number  = {446},
  pages   = {496--509},
  year    = {1999},
  doi     = {10.1080/01621459.1999.10474144}
}

@article{hanson_2003,
  author  = {Hanson, Robin},
  title   = {Combinatorial Information Market Design},
  journal = {Information Systems Frontiers},
  volume  = {5},
  number  = {1},
  pages   = {107--119},
  year    = {2003},
  doi     = {10.1023/A:1022058209073}
}

@article{kaplan_meier_1958,
  author  = {Kaplan, Edward L. and Meier, Paul},
  title   = {Nonparametric Estimation from Incomplete Observations},
  journal = {Journal of the American Statistical Association},
  volume  = {53},
  number  = {282},
  pages   = {457--481},
  year    = {1958},
  doi     = {10.1080/01621459.1958.10501452}
}

@unpublished{kota_ai_oracles_2026,
  author        = {Kota, Tarun},
  title         = {Design and Evaluation of Multi-Agent AI Oracle Systems for Prediction Market Resolution},
  year          = {2026},
  eprint        = {2605.30802},
  archivePrefix = {arXiv},
  doi           = {10.48550/arXiv.2605.30802}
}

@article{mackinlay_1997,
  author  = {MacKinlay, A. Craig},
  title   = {Event Studies in Economics and Finance},
  journal = {Journal of Economic Literature},
  volume  = {35},
  number  = {1},
  pages   = {13--39},
  year    = {1997}
}

@article{manski_2006,
  author  = {Manski, Charles F.},
  title   = {Interpreting the Predictions of Prediction Markets},
  journal = {Economics Letters},
  volume  = {91},
  number  = {3},
  pages   = {425--429},
  year    = {2006},
  doi     = {10.1016/j.econlet.2005.12.004}
}

@inproceedings{muehlberger_oracle_patterns_2020,
  author    = {M{\"u}hlberger, Roman and Bachhofner, Stefan and Castell{\'o} Ferrer, Eduardo and Di Ciccio, Claudio and Weber, Ingo and W{\"o}hrer, Maximilian and Zdun, Uwe},
  title     = {Foundational Oracle Patterns: Connecting Blockchain to the Off-Chain World},
  booktitle = {Business Process Management: Blockchain and Robotic Process Automation Forum},
  pages     = {35--51},
  year      = {2020},
  publisher = {Springer},
  doi       = {10.1007/978-3-030-58779-6_3},
  eprint    = {2007.14946},
  archivePrefix = {arXiv}
}

@online{polymarket_adapter_auth_2026,
  author  = {{Polymarket}},
  title   = {UmaCtfAdapter Auth Solidity Source},
  year    = {2026},
  url     = {https://github.com/Polymarket/uma-ctf-adapter/blob/8b76cc9e0d46c6f7450a0adb0ddc0f5b0568c9cc/src/mixins/Auth.sol},
  urldate = {2026-07-12}
}

@online{polymarket_bulletin_board_2026,
  author  = {{Polymarket}},
  title   = {BulletinBoard Solidity Source},
  year    = {2026},
  url     = {https://github.com/Polymarket/uma-ctf-adapter/blob/8b76cc9e0d46c6f7450a0adb0ddc0f5b0568c9cc/src/mixins/BulletinBoard.sol},
  urldate = {2026-07-12}
}

@online{polymarket_perps_faq_2026,
  author  = {{Polymarket}},
  title   = {Polymarket Perps FAQ},
  year    = {2026},
  url     = {https://docs.polymarket.com/perps/faq},
  urldate = {2026-07-12}
}

@online{polymarket_perps_markets_2026,
  author  = {{Polymarket}},
  title   = {Polymarket Perps: Markets},
  year    = {2026},
  url     = {https://docs.polymarket.com/perps/learn-about-trading/markets},
  urldate = {2026-07-12}
}

@online{polymarket_resolution_docs_2026,
  author  = {{Polymarket}},
  title   = {Resolution},
  year    = {2026},
  url     = {https://docs.polymarket.com/concepts/resolution},
  urldate = {2026-07-12}
}

@online{polymarket_resolution_subgraph_2026,
  author  = {{Polymarket}},
  title   = {Resolution Subgraph: Polygon Adapter Data Sources and Resolution Mappings},
  year    = {2026},
  url     = {https://github.com/Polymarket/resolution-subgraph},
  urldate = {2026-07-13},
  note    = {Operational source freeze; adapter registry and event-handler configuration pinned at commit 75d1818547862a5bd3477ed2e6b16f693d42dab6}
}

@online{polymarket_uma_adapter_contract_2026,
  author  = {{Polymarket}},
  title   = {UmaCtfAdapter Solidity Source},
  year    = {2026},
  url     = {https://github.com/Polymarket/uma-ctf-adapter/blob/8b76cc9e0d46c6f7450a0adb0ddc0f5b0568c9cc/src/UmaCtfAdapter.sol},
  urldate = {2026-07-12}
}

@online{polymarket_uma_adapter_repo_2026,
  author  = {{Polymarket}},
  title   = {UMA CTF Adapter},
  year    = {2026},
  url     = {https://github.com/Polymarket/uma-ctf-adapter},
  urldate = {2026-07-12}
}

@article{turnbull_1976,
  author  = {Turnbull, Bruce W.},
  title   = {The Empirical Distribution Function with Arbitrarily Grouped, Censored and Truncated Data},
  journal = {Journal of the Royal Statistical Society: Series B},
  volume  = {38},
  number  = {3},
  pages   = {290--295},
  year    = {1976},
  doi     = {10.1111/j.2517-6161.1976.tb01597.x}
}

@online{uma_moov2_2026,
  author  = {{UMA}},
  title   = {ManagedOptimisticOracleV2},
  year    = {2026},
  url     = {https://docs.uma.xyz/developers/managedoptimisticoraclev2},
  urldate = {2026-07-12}
}

@online{uma_moov2_programmatic_2026,
  author  = {{UMA}},
  title   = {ManagedOptimisticOracleV2: Proposing Programmatically},
  year    = {2026},
  url     = {https://docs.uma.xyz/developers/managedoptimisticoraclev2/proposing-programmatically},
  urldate = {2026-07-12}
}

@online{uma_moov2_whitelist_2026,
  author  = {{UMA}},
  title   = {ManagedOptimisticOracleV2: Default Proposer Whitelist},
  year    = {2026},
  url     = {https://docs.uma.xyz/developers/managedoptimisticoraclev2/default-proposer-whitelist},
  urldate = {2026-07-12}
}

@online{uma_oracle_overview_2026,
  author  = {{UMA}},
  title   = {How Does UMA's Oracle Work?},
  year    = {2026},
  url     = {https://docs.uma.xyz/protocol-overview/how-does-umas-oracle-work},
  urldate = {2026-07-12}
}

@online{uma_polymarket_verification_2026,
  author  = {{UMA}},
  title   = {Verification Guide: Polymarket and YES\_OR\_NO\_QUERY},
  year    = {2026},
  url     = {https://docs.uma.xyz/verification-guide/yes_or_no},
  urldate = {2026-07-12}
}

@article{wolfers_zitzewitz_2004,
  author  = {Wolfers, Justin and Zitzewitz, Eric},
  title   = {Prediction Markets},
  journal = {Journal of Economic Perspectives},
  volume  = {18},
  number  = {2},
  pages   = {107--126},
  year    = {2004},
  doi     = {10.1257/0895330041371321}
}

@misc{nechepurenko2026_p1,
  author        = {Nechepurenko, Maksym},
  title         = {Resolution-Aware Perpetual Futures on Binary Prediction Markets: Failure Modes and Mechanical Stress Tests Using Polymarket Data},
  year          = {2026},
  url           = {https://ssrn.com/abstract=6748278},
  note          = {SSRN abstract 6748278; arXiv:2605.10400},
  eprint        = {2605.10400},
  archiveprefix = {arXiv},
  primaryclass  = {q-fin.TR},
  doi           = {10.2139/ssrn.6748278}
}

@misc{nechepurenko2026_p2,
  author        = {Nechepurenko, Maksym},
  title         = {A Taxonomy of Event-Linked Perpetual Futures: Design Axes, Failure Modes, and Empirical Evaluability},
  year          = {2026},
  url           = {https://ssrn.com/abstract=6748298},
  note          = {SSRN abstract 6748298; arXiv:2605.10428},
  eprint        = {2605.10428},
  archiveprefix = {arXiv},
  primaryclass  = {q-fin.TR},
  doi           = {10.2139/ssrn.6748298}
}

@misc{nechepurenko2026_p3,
  author        = {Nechepurenko, Maksym},
  title         = {Manipulation, Informed Trading, and Regulation in Leveraged Event-Linked Markets},
  year          = {2026},
  url           = {https://ssrn.com/abstract=6748318},
  note          = {SSRN abstract 6748318; arXiv:2605.10486},
  eprint        = {2605.10486},
  archiveprefix = {arXiv},
  primaryclass  = {q-fin.TR},
  doi           = {10.2139/ssrn.6748318}
}

@misc{nechepurenko2026_p4,
  author        = {Nechepurenko, Maksym},
  title         = {Fill-Side Non-Retail Trading on Polymarket: Identification Limits, Behavioral Tiers, and Microstructure Signatures},
  year          = {2026},
  url           = {https://ssrn.com/abstract=6751284},
  note          = {SSRN abstract 6751284; arXiv title differs: Fill-Side Behavioral Concentration on Polymarket: Identification Limits under Record-Level Attribution.},
  eprint        = {2605.11640},
  archiveprefix = {arXiv},
  primaryclass  = {q-fin.TR},
  doi           = {10.2139/ssrn.6751284}
}
\end{document}